\newcommand{\tr}{^\intercal}
\newcommand{\norm}[1]{\left\| #1 \right\|}
\DeclarePairedDelimiterX{\Set}[2]\{\}{%
\newtheoremstyle{customthm}
  {1.5mm}
  {1.5mm}
  {\itshape}
  {}
  {\bfseries}
  {.}
  {1.5mm}
  {}
\theoremstyle{customthm}
\newtheorem{Theorem}{Theorem}
\newtheorem{Proposition}{Proposition}
\newtheorem{Lemma}{Lemma}
\newtheorem{Definition}{Definition}
\newtheorem{Remark}{Remark}
\begin{document}

\title{Once upon a time step: A closed-loop approach to robust MPC design }

\author{Anilkumar Parsi, Marcell Bartos, Amber Srivastava, Sebastien Gros, Roy S. Smith, \IEEEmembership{Fellow, IEEE}
\thanks{This work was supported by the Swiss National Science Foundation under Grant 200021\_178890 and NCCR Automation Grant 180545.
Corresponding author Anilkumar Parsi.}
\thanks{Anilkumar Parsi, Amber Srivastava, Roy Smith and Marcell Bartos are with the Automatic Control Laboratory, Swiss Federal Institute of Technology (ETH Z\"urich), 8092 Zurich, Switzerland (e-mail:
aparsi/asrivastava/rsmith@control.ee.ethz.ch, mbartos@student.ethz.ch).}
\thanks{Sebastien Gros is with the Department of Engineering Cybernetics, Norwegian University of Science and Technology (NTNU), Trondheim, Norway (email:sebastien.gros@ntnu.no)}
}

\maketitle

\begin{abstract}                          
A novel perspective on the design of robust model predictive control (MPC) methods is presented, 
whereby closed-loop constraint satisfaction is ensured using recursive feasibility of the MPC optimization. Necessary and sufficient conditions are derived for recursive feasibility, based on the effects of model perturbations and disturbances occurring at {\em one time step}. Using these conditions and Farkas' lemma, sufficient conditions suitable for design are formulated. The proposed method is called a closed-loop design, as only the existence of feasible inputs at the next time step is enforced by design. This is in contrast to most existing formulations, which compute control policies that are feasible under the worst-case realizations of all model perturbations and exogenous disturbances in the MPC prediction horizon. The proposed method has an online computational complexity similar to nominal MPC methods while preserving guarantees of constraint satisfaction, recursive feasibility and stability. Numerical simulations demonstrate the efficacy of our proposed approach.
\end{abstract}

\begin{IEEEkeywords}     
Predictive control for linear systems, Robust control, Uncertain systems, Constrained control
\end{IEEEkeywords}                            

\section{Introduction}


\IEEEPARstart{M}{odel} predictive control (MPC) is one of the popular control strategies in use today. Owing to the modeling and computational flexibility, MPC has been successfully implemented in a diverse range of applications --- such as oil refineries \cite{qin2003survey}, autonomous vehicles \cite{williams2018information}, robotics \cite{Pieters2016model}, and supply chain management \cite{lejarza2021economic}. A distinctly highlighting feature of MPC is its ability to efficiently handle constraints on the system states and control inputs, while also guaranteeing the stability of the system \cite{rawlings2017model}. In particular, MPC poses decision making as an optimization problem, where one optimizes an objective function subject to the constraints resulting from the system dynamics, and admissible system states and control inputs. The optimization problem is solved at each time step to compute an open-loop input sequence over a finite number of time steps (called the prediction horizon), of which only the first input is applied in closed loop. Hence MPC produces an optimization-based closed-loop policy from the current system state to an input to be applied to the system. A popular approach in practice is to use an approximate linear dynamical model of the system, as it results in convex optimization problems suitable for real-time implementation \cite{darby2012mpc}. Such an approximation can be compensated
for, by introducing structured uncertainty and disturbances into the model.

Robust MPC algorithms explicitly consider the effects of model uncertainties and disturbances in the controller design so that closed-loop stability and constraint satisfaction can be guaranteed \cite{kouvaritakis2015model}. In addition to these properties, an important consideration for MPC is whether the optimization problem remains feasible in closed-loop operation, which is known as recursive feasibility \cite{kerrigan2001robust_PHD,lofberg2012oops}. The design of robust MPC thus has three desired properties: (i) constraint satisfaction, (ii) recursive feasibility, and (iii) closed-loop stability. In this paper, we highlight that generally (ii) already implies (i), and use this observation to propose a novel method for designing robust MPC schemes. 

The design for (i) and (ii) is coupled and follows an open-loop approach in most existing MPC works \cite{fleming2014robust,hu2019output,parsi2022scalable}. This involves two steps. First, constraint satisfaction is explicitly enforced on the open-loop trajectories computed by the MPC optimizer, considering the worst-case model perturbations and disturbances along the prediction horizon of MPC. Second, the state of the system at the end of the prediction horizon is driven into a precomputed terminal set, which is robustly invariant under a known terminal controller \cite{kouvaritakis2015model}. This strategy ensures that the open-loop trajectories do not violate constraints for an infinite time, thus achieving both recursive feasibility and constraint satisfaction.

Although such an open-loop design approach works, a large degree of conservatism can originate from ensuring feasibility over multiple time steps in the prediction horizon. 
This is because recursive feasibility of MPC is a one-step property. That is, recursive feasibility only requires that under the action of the first control input computed by MPC at the current time step, the MPC optimization is also feasible at the next time step. However, the open-loop design for recursive feasibility operates via ensuring feasibility throughout the prediction horizon, instead of the next time step alone. The conservatism of open-loop approaches is arguably stemming from this discrepancy, as depicted in Figure \ref{fig:CLR1_1}. 

In Figure \ref{fig:CLR1_1}, it can be seen that tube MPC approaches (such as \cite{mayne2005robust,schwenkel2022}) construct a state tube which accounts for all possible disturbances $w_k, w_{k+1}, w_{k+2}$ and model perturbations $\Delta_k,\Delta_{k+1},\Delta_{k+2}$ in the prediction horizon of three time steps, before the open-loop state trajectories reach the terminal set. 

In addition to the conservatism, many open-loop approaches have a large computational complexity of online optimization. This is because in an open-loop approach, MPC either has to recursively outer-approximate the reachable sets under worst-case perturbations (as in tube MPC methods), or use scenario trees which grow exponentially along the horizon \cite{maiworm2015scenario}. 

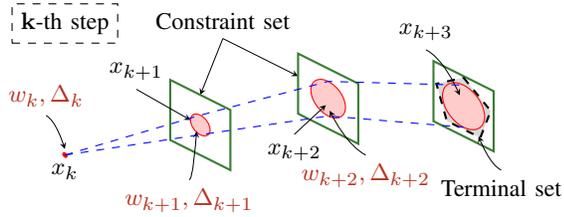
\begin{figure}
     \centering	
     \begin{tikzpicture}[font=\small]
 	\tikzset{axisArrow/.style={black,loosely dotted, {Latex[length=0.7mm,width=0.7mm]}-{Latex[length=0.7mm,width=0.7mm]}, thin}}
	\begin{scope}[shift={(-0.5,2)}]
	\node[draw,dashed] at (0,0) {$\mathbf{k}$-th step};
	\end{scope}
 	\begin{scope}[yslant=-0.5,shift={(0,0)}]
 		\coordinate (x0) at (-0.5,0);
 		\filldraw [red]  (x0) circle[radius=0.8pt] node[anchor=north,black] {$x_k$};
 		\coordinate (w0) at (-0.75,0.4);
 		\node [anchor=south,BrickRed] at (w0) {$w_k,\Delta_k$};
 		\draw[black,ultra thin,-stealth,anchor=north]  (w0)parabola(-0.5,0.08);
 	\end{scope}
 	\begin{scope}[yslant=-0.5,shift={(1.3,1.3)}]
 	{
 		\coordinate (x1_1) at (0,0.13);
 		\coordinate (x1_2) at (0,-0.13);
 		\coordinate (CS) at (0.3,1.3);
 		\coordinate (CS1) at (0,0.4);
 		\coordinate (CS2) at (1.3,1.3);
 		\coordinate (x1) at (-0.85,0.05);
 		\coordinate (w1) at (-0.15,-0.8);
 		\node [anchor=north,BrickRed] at (w1) {$w_{k+1},\Delta_{k+1}$};
 		\node [anchor=south] at (x1) {$x_{k+1}$};
 		\node [anchor=south] at (CS) {Constraint set};
 		\draw[black,ultra thin,-stealth,anchor=south]  (CS)--(CS1);
 		\draw[black,ultra thin,-stealth,anchor=south]  (CS)--(CS2);
 		\draw[OliveGreen,thick,fill=black!0] (-0.4,-0.4) rectangle +(0.8,0.8);
 		\draw[red,fill=red!20] (0,0) circle [radius=0.13];
 		\draw[black,ultra thin,-stealth] (x1)--(-0.13,0);
 		\draw[black,ultra thin,-stealth] (w1)parabola(-0.05,-0.11);
 	}
 	\end{scope}
 	\begin{scope}[yslant=-0.5,shift={(3.0,2.5)}]
 	{
 		\coordinate (x2_1) at (0,0.23);
 		\coordinate (x2_2) at (0,-0.23);
 		\coordinate (x2) at (-0.45,-0.7);
 		\coordinate (w2) at (0.5,-0.5);
 		\node [anchor=north] at (x2) {$x_{k+2}$};
 		\node [anchor=north,BrickRed] at (w2) {$w_{k+2},\Delta_{k+2}$};
 		\draw[OliveGreen,thick,fill=black!0] (-0.4,-0.4) rectangle +(0.8,0.8);
 		\draw[red,fill=red!20] (0,0) circle [radius=0.23];
 		\draw[black,ultra thin,-stealth] (x2)--(-0.03,-0.1);
 		\draw[black,ultra thin,-stealth] (w2)parabola(0.02,-0.15);
 	}
 	\end{scope}
 	\begin{scope}[yslant=-0.5,shift={(4.8,3.3)}]
 	{
 		\coordinate (x3_1) at (0,0.28);
 		\coordinate (x3_2) at (0,-0.28);
 		\coordinate (x3) at (-0.5,0.5);
 		\coordinate (TC) at (0.5,-0.6);
 		\node [anchor=north] at (TC) {Terminal set};
 		\node [anchor=south] at (x3) {$x_{k+3}$};
	 	\draw[OliveGreen,thick,fill=black!0] (-0.4,-0.4) rectangle +(0.8,0.8);
	 	\draw[black,thick,dashed,fill=black!15] (0,0.38)--(0.361,0.117)--(0.223,-0.307)--(-0.223,-0.307)--(-0.361,0.117)--cycle;
 		\draw[red,fill=red!20] (0,0) circle [radius=0.28];
 		\draw[black,ultra thin,-stealth] (x3)parabola(-0.05,0.12);
 		\draw[black,ultra thin,-stealth] (TC)parabola(0.2,-0.3);
 	}
 	\end{scope}
	\draw[blue, dashed] (x0)--(x1_1)--(x2_1)--(x3_1);
	\draw[blue, dashed] (x0)--(x1_2)--(x2_2)--(x3_2);
	\begin{scope}[yslant=-0.5,shift={(2.5,0.2)}]
	\end{scope}

	\end{tikzpicture}
     \caption{Classical tube MPC methods, where the predicted open-loop trajectories remain feasible under \emph{all} model perturbations $\Delta_i$ and disturbances $w_i$ occurring in the prediction horizon.}\label{fig:CLR1_1}
\end{figure}

This work is motivated by the insight that under mild conditions, recursive feasibility implies constraint satisfaction. That is, if an MPC problem remains feasible in closed loop, then the system constraints are also satisfied. Therefore, one need not explicitly ensure constraint satisfaction along the prediction horizon under the open-loop MPC policy. Although this observation is not particularly difficult to verify, it is not exploited in the controller design in most existing methods. 

 
To the best of our knowledge, this insight is not formally presented in previous literature, but a few early robust MPC methods implicitly use it for systems affected by additive disturbances \cite{chisci99Robustifying,kerrigan2001Robust,kerrigan2001robust_PHD}. 
All these works study the recursive feasibility of the MPC optimization and do not explicitly prove constraint satisfaction, as recursive feasibility already implies constraint satisfaction. The methods in \cite{chisci99Robustifying} 
 and Chapter 6.5 of \cite{kerrigan2001robust_PHD} propose ways to robustify nominal MPC controllers, if a robust control invariant set is known a priori. 
However, these sets are very difficult to compute \cite{rakovic2007optimized,rungger2017}. Whereas \cite{kerrigan2001Robust} studies the problem of ensuring recursive feasibility using set-invariance, it only provides tools for analysis of a given robust MPC algorithm. The recent work \cite{abdelsalam2021synthesis}  extends the ideas from \cite{kerrigan2001Robust} for systems with multiplicative perturbations, but also provides conditions suitable for analysis, not design. In contrast to these early methods, most later works explicitly design for constraint satisfaction following the open-loop approach described earlier  \cite{fleming2014robust}-\cite{schwenkel2022}. 

In this work, we present a robust MPC design which explicitly ensures recursive feasibility in closed-loop operation, and hence refer to it as closed-loop robust MPC. The proposed method in this work uses constraint tightening to design a recursively feasible MPC. In constraint tightening methods, the system constraints along the prediction horizon are tightened to account for model uncertainty, such that if a nominal trajectory satisfies the modified constraints, the closed-loop trajectories remain feasible under model perturbations and disturbances \cite{chisci2001systems,parsi2022computationally}. Because the online optimization computes only the nominal trajectories, the computational complexity is similar to that of nominal MPC schemes.

The closed-loop robust MPC proposed in this work is depicted in Figure \ref{fig:CLR1_2}. In this approach, the robust MPC design accounts {\em only} for the disturbance $w_k$ and model perturbation $\Delta_k$ occurring at the current ($k$-th) time step. In particular, the MPC constraints are tightened offline such that, for any feasible nominal trajectory at a time step $k$, disturbances $w_k$ and perturbations $\Delta_k$ result in a feasible nominal trajectory at time step $k+1$, guaranteeing recursive feasibility. Note that, although constraint tightening techniques already exist in literature (\cite{chisci2001systems,parsi2022computationally}), the tightenings therein are computed by accounting for all perturbations and disturbances in the MPC prediction horizon, instead of those occurring at the first time step as done in this work.

\begin{figure}
     \centering	
     \begin{tikzpicture}[font=\small]
 	\tikzset{axisArrow/.style={black,loosely dotted, {Latex[length=0.7mm,width=0.7mm]}-{Latex[length=0.7mm,width=0.7mm]}, thin}}
	\begin{scope}[shift={(-0.5,-1.5)}]
	    \node[draw,dashed] at (0,0) {$\mathbf{k}$-th step};
	\end{scope}
 	\begin{scope}[yslant=-0.5,shift={(0,-3.5)}]
 		\coordinate (x0) at (-0.5,0);
 		\draw[OliveGreen,thick,fill=black!0] (-0.9,-0.4) rectangle +(0.8,0.8);
 		\draw[BrickRed,dashed,fill=BrickRed!10] (-0.8,-0.3) rectangle +(0.6,0.6);
 		\coordinate (NT) at (0.3,1.3);
 		\coordinate (NT1) at (0.15,0.5);
 		\node[anchor=south] at (NT) {Nominal trajectory};
 		\draw[black,ultra thin,-stealth] (NT) parabola (NT1);
 		\filldraw [red]  (x0) circle[radius=0.8pt] node[anchor=north,black] {$x_k$};
 	\end{scope}
 	\begin{scope}[yslant=-0.5,shift={(3.0,-1.0)}]
 	{
 	    \coordinate (x2) at (0,0);
 		\draw[OliveGreen,thick,fill=black!0] (-0.4,-0.4) rectangle +(0.8,0.8);
 		\draw[ForestGreen,dashed,fill=ForestGreen!10] (-0.22,-0.22) rectangle +(0.44,0.44);
 		\filldraw [red]  (x2) circle[radius=0.8pt];
 	}
 	\end{scope}
 	\begin{scope}[yslant=-0.5,shift={(1.3,-2.3)}]
 	{
 	    \coordinate (x1) at (0,0);
 	    \coordinate (TiC1) at (0.3,-0.2);
 	    \coordinate (TiC2) at (1.8,1.11);
 	    \coordinate (TC) at (1.5,-0.2);
 		\node [anchor=north] at (TC) {Tightened constraints};
 		\draw[OliveGreen,thick,fill=black!0] (-0.4,-0.4) rectangle +(0.8,0.8);
 		\draw[Blue,dashed,fill=Blue!10] (-0.25,-0.25) rectangle +(0.5,0.5);
 		\draw[Black,ultra thin,-stealth] (TC) to (TiC1);
 		\draw[Black,ultra thin,-stealth] (TC) to (TiC2);
 		\filldraw [red]  (x1) circle[radius=0.8pt];
 	}
 	\end{scope}
 	\begin{scope}[yslant=-0.5,shift={(4.8,-0.2)}]
 	{
 	    \coordinate (x3) at (0,0);
 	    \coordinate (TC) at (0.5,-0.6);
 	    \node [anchor=north] at (TC) {Terminal set};
	 	\draw[OliveGreen,thick,fill=black!0] (-0.4,-0.4) rectangle +(0.8,0.8);
	 	\draw[black,thick,dashed,fill=black!15] (0,0.38)--(0.361,0.117)--(0.223,-0.307)--(-0.223,-0.307)--(-0.361,0.117)--cycle;
	 	\draw[black,ultra thin,-stealth] (TC)parabola(0.2,-0.3);
	 	\filldraw [red]  (x3) circle[radius=0.8pt];
 	}
 	\end{scope}
 	\begin{scope}[decoration={markings,mark=at position 0.5 with {\arrow{stealth}}}]
 	\draw[blue,dashed,postaction={decorate}] (x0)--(x1);
 	\draw[blue,dashed,postaction={decorate}] (x1)--(x2);
 	\draw[blue,dashed,postaction={decorate}] (x2)--(x3);
 	\end{scope}
 	
 	
    \begin{scope}[shift={(-0.5,-4.8)}]
	    \node[draw,dashed] at (0,0) {$\mathbf{k+1}$-th step};
	\end{scope}
 	\begin{scope}[yslant=-0.5,shift={(0,-6.5)}]
 		\coordinate (x0) at (-0.5,0);
 		\filldraw [red]  (x0) circle[radius=0.8pt] node[anchor=north,black] {$x_k$};
 		\coordinate (w0) at (-0.75,0.4);
 		\node [anchor=south,BrickRed] at (w0) {$w_k,\Delta_k$};
 		\draw[black,ultra thin,-stealth,anchor=north]  (w0)parabola(-0.5,0.08);
 	\end{scope}
 	\begin{scope}[yslant=-0.5,shift={(1.3,-5.3)}]
 	{
 		\coordinate (x1) at (0,0.1);
 		\coordinate (x1_loc) at (-0.75,-0.2);
 		\coordinate (x1_1) at (0,0);
 		\coordinate (x1_3) at (0,0.17);
 		\coordinate (x1_4) at (0,-0.17);
 		\node[anchor=south] at (x1_loc) {$x_{k+1}$};
 		\draw[OliveGreen,thick,fill=black!0] (-0.4,-0.4) rectangle +(0.8,0.8);
 		\draw[BrickRed,dashed,fill=BrickRed!10] (-0.3,-0.3) rectangle +(0.6,0.6);
 		\draw[red,fill=red!20] (0,0) circle [radius=0.17];
 		\filldraw [red]  (x1) circle[radius=0.8pt];
 		\draw[black,ultra thin,-stealth] (x1_loc)--(-0.02,0.1);
 		\filldraw [red]  (x1_1) circle[radius=0.8pt];
 	}
 	\end{scope}
 	\begin{scope}[yslant=-0.5,shift={(3.0,-4.0)}]
 	{
 	    \coordinate (x2) at (0,0);
 	    \coordinate (x2_3) at (0,0.15);
 	    \coordinate (x2_4) at (0,-0.15);
 		\draw[OliveGreen,thick,fill=black!0] (-0.4,-0.4) rectangle +(0.8,0.8);
 		\draw[Blue,dashed,fill=Blue!10] (-0.25,-0.25) rectangle +(0.5,0.5);
 		\draw[red,fill=red!20] (0,0) circle [radius=0.15];
 		\filldraw [red]  (x2) circle[radius=0.8pt];
 	}
 	\end{scope}
 	\begin{scope}[yslant=-0.5,shift={(4.8,-3.2)}]
 	{
 	    \coordinate (x3) at (0,0);
 	    \coordinate (x3_3) at (0,0.13);
 	    \coordinate (x3_4) at (0,-0.13);
	 	\draw[OliveGreen,thick,fill=black!0] (-0.4,-0.4) rectangle +(0.8,0.8);
	 	\draw[ForestGreen,dashed,fill=ForestGreen!10] (-0.22,-0.22) rectangle +(0.44,0.44);
	 	\draw[red,fill=red!20] (0,0) circle [radius=0.13];
	 	\coordinate (PNTx2) at (-1.8,-0.85);
 	    \coordinate (PNTx1) at (-0.02,-0.06);
 	    \coordinate (PNT) at (-1.5,-1.5);
 	    \node[anchor=north] at (PNT) {\begin{tabular}{c}Perturbed nominal\\ trajectory for $k+1$-th step\end{tabular}};
 	    \draw[black,ultra thin,-stealth] (PNT)--(PNTx1);
	 	\draw[black,ultra thin,-stealth] (PNT)--(PNTx2);
	 	\filldraw [red]  (x3) circle[radius=0.8pt];
 	}
 	\end{scope}
 	\begin{scope}[yslant=-0.5,shift={(6.5,-2.2)}]
 	{
 	    \coordinate (x4) at (0,0);
 	    \coordinate (x4_3) at (0,0.12);
 	    \coordinate (x4_4) at (0,-0.12);
	 	\draw[OliveGreen,thick,fill=black!0] (-0.4,-0.4) rectangle +(0.8,0.8);
	 	\draw[black,thick,dashed,fill=black!15] (0,0.38)--(0.361,0.117)--(0.223,-0.307)--(-0.223,-0.307)--(-0.361,0.117)--cycle;
	 	\draw[red,fill=red!20] (0,0) circle [radius=0.12];
	 	\filldraw [red]  (x4) circle[radius=0.8pt];
 	}
 	\end{scope}
 	\begin{scope}[decoration={markings,mark=at position 0.5 with {\arrow{stealth}}}]
     	\draw[black,postaction={decorate}] (x0)--(x1);
     	\draw[blue,dashed,postaction={decorate}] (x1_1)--(x2);
     	\draw[blue,dashed,postaction={decorate}] (x2)--(x3);
     	\draw[blue,dashed,postaction={decorate}] (x3)--(x4);
     	\draw[blue] (x1_3)--(x2_3)--(x3_3)--(x4_3);
     	\draw[blue] (x1_4)--(x2_4)--(x3_4)--(x4_4);
 	\end{scope}
 	\begin{scope}[yslant=-0.5,shift={(2.5,-6.4)}]
	\end{scope}
 	\end{tikzpicture}
     \caption{Closed-loop robust MPC (our proposed approach). Here, MPC only computes a nominal trajectory online such that tightened constraints are satisfied. The tightenings are designed offline such that, under perturbation $\Delta_k$ and disturbance $w_k$ occurring at the \emph{current} ($k$-th) time step, the MPC problem will be feasible at the next ($k+1$-th) time step.}\label{fig:CLR1_2}
\end{figure}
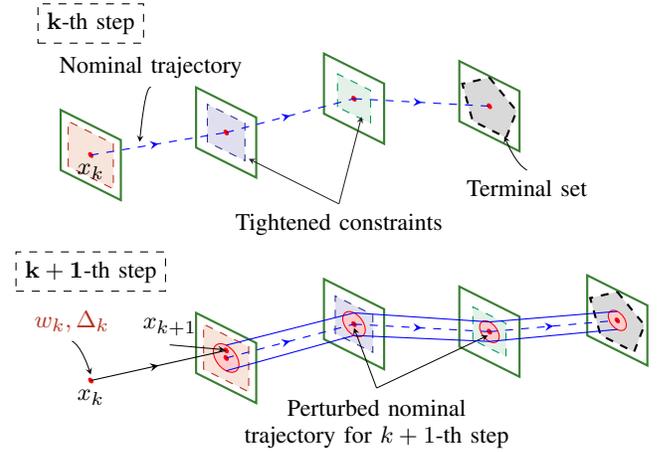

In this work, a constraint tightening technique is developed for systems whose dynamics can be described by a nominal linear model combined with a time-varying linear fractional perturbation \cite{doyle1991} and bounded exogenous disturbances. The proposed algorithm has two phases: offline and online. In the offline phase, a non-convex optimization is solved to compute constraint tightenings. In the online phase, a nominal MPC problem with tightened constraints is solved at each time step to compute the control inputs. The proposed framework is designed to ensure recursive feasibility, constraint satisfaction and input-to-state stability. Numerical examples demonstrate that the proposed design results in an order of magnitude reduction in online computation time and improved performance compared to state-of-the-art methods.

Our work has three novel contributions. First, we formally present the conditions under which recursive feasibility of MPC optimization implies closed-loop constraint satisfaction. Second, necessary and sufficient conditions for recursive feasibility are presented for systems affected by linear fractional perturbations and disturbances. Third, recursive feasibility is formulated as a design constraint on the robust MPC controller, by analyzing only the effects of perturbations and disturbances occurring at one time step.

\subsection{Notation}
The sets of real numbers is denoted by $ \mathbb{R} $ and the sequence of integers from $ n_1 $ to $ n_2 $  by $ \mathbb{N}_{n_1}^{n_2} $. 
For a vector $ b $,  $\norm{b}_2$ represents the $2-$norm. The $ i ^{th}$ row of a matrix $ A $ is denoted by $ [A]_{i} $. The notation $ a_{l|k} $ denotes the value of $ a $ at time step $ k+l $ computed at the time step $ k $. The identity matrix of size $n\times n$ is denoted by $I_n$.  A block of zeros of size $n\times m$ inside a larger matrix will be denoted by $0_{n,m}$ when the size cannot be inferred from the other blocks, and $0$ otherwise. The minimum eigenvalue of a matrix $A$ is denoted as $\rho_{\min}(A)$. The Kronecker product of two matrices $A$ and $B$ is denoted by $A \otimes B$, and $\text{co}\{\cdot \}$ represents the convex hull operator. A continuous function $ \alpha: \mathbb{R}_{\ge 0} \rightarrow \mathbb{R}_{\ge 0}  $ is a $ \mathcal{K} $ function if $ \alpha(0) = 0 $, $ \alpha(s) >0 $ for all $ s>0 $ and it is strictly increasing; it is a $ \mathcal{K}_{\infty} $ function if it is a $ \mathcal{K} $ function and $\alpha(s)\rightarrow \infty$ as $s\rightarrow \infty$. A continuous function $ \beta:\mathbb{R}_{\ge 0} \times \mathbb{N}_{0}^{\infty} \rightarrow \mathbb{R}_{\ge 0} $ is a $ \mathcal{K} \mathcal{L} $ function if $ \beta(s,t) $ is a $ \mathcal{K}$ function in $ s $ for every $ t\ge 0 $, it is strictly decreasing in $ t $ for every $ s>0 $ and $ \beta(s,t) \rightarrow 0$ when $ t \rightarrow \infty $.

\section{Problem formulation}
We consider uncertain linear, time-invariant systems of the form:
\begin{subequations}\label{eq:Dynamics}
\begin{align}
    x_{k+1} &= A x_{k} + B u_{k} + B_p p_k + B_w w_k, \label{eq:Dynamics1}\\
    q_k &= D_x x_k + D_u u_k + D_w w_k, \label{eq:Dynamics2}\\ 
    p_k &= \Delta_k q_k,\label{eq:Dynamics3} 
\end{align}
\end{subequations}
where $x_k \in \mathbb{R}^{n_x} $ represents the state of the system, $u_k \in \mathbb{R}^{n_u} $ represents the control input, and $w_k \in \mathbb{R}^{n_w}$ represents an exogenous disturbance acting on the system's state. In addition, the model uncertainty is captured using a linear fractional transformation (see \cite{doyle1991}), described by $p_k, q_k \in \mathbb{R}^{n_p}$ and the matrix $\Delta_k \in \mathbb{R}^{n_p \times n_p}$. The vectors $ p_k, q_k $ and the matrix $\Delta_k$ cannot be measured, but $\Delta_k$ is known to lie inside the set
\begin{align}\label{eq:DeltaBound}
    \mathcal{P}:= \text{co}\{ \Delta^1, \Delta^2,\ldots ,\Delta^{n_{\Delta}} \},
\end{align}
where $n_{\Delta}$ represents the number of vertices defining the convex hull $\mathcal{P}$. 

The exogenous disturbance $w_k$ is assumed to lie within a compact set containing the origin, defined as
\begin{align}\label{eq:wBound}
    \mathcal{W}:= \{ w \in \mathbb{R}^{n_w}|  H_w w \le h_w \},
\end{align}
where $ H_w \in \mathbb{R}^{n_{H_w}\times n_w}$ and $ h_w \in \mathbb{R}^{n_{H_w}}$. 

Moreover, the states and inputs of the system are required to lie in a compact polytopic constraint set containing the origin, defined as
\begin{align}\label{eq:Constraints}
    \mathcal{C}:= \{(x,u) \in \mathbb{R}^{n_x}{\times}\mathbb{R}^{n_u}\: \:|\: \: F x + G u \le b \},
\end{align}
where $F\in\mathbb{R}^{n_c\times n_x}, G\in\mathbb{R}^{n_c\times n_u}$ and $ b\in\mathbb{R}^{n_c}$. The control task is subject to a stage cost function of the form $l(x, u)$, whose cumulative sum has to be minimized over a (possibly) infinite horizon. Directly minimizing this cost is intractable, as it requires optimization over infinite number of variables and constraints. 



Instead, model predictive control (MPC) is used to find suboptimal input sequences \cite{borelli2017model}. In this approach, a receding horizon strategy is used  where the control inputs over the next $ N $  timesteps (called the prediction horizon) are optimized while also ensuring that the state after $ N $ timesteps reaches a terminal set $ \mathcal{X}_N:= \{x\: | \:Yx\le z\} $ which contains the origin. In this work, robustness is ensured using constraint tightening \cite{chisci2001systems,parsi2022computationally}. That is, the MPC optimization problem only uses a nominal model of the system, and tightens the constraints \eqref{eq:Constraints} to ensure robustness in closed loop. 

Thus, an optimization problem of the following form is solved at each time step $k$ using the available state measurement $ x_k $,
\begin{subequations}\label{eq:OnlOptProb}
    \begin{align}
        \min_{\hat{\mathbf{x}}_k,\hat{\mathbf{u}}_k}  \quad  l_N(\hat{x}_{N|k}) + \displaystyle\sum_{i=0}^{N-1} &l(\hat{x}_{i|k}, \hat{u}_{i|k}) \label{eq:OnlOptProb1} \\
		\text{s.t.} \quad A\hat{x}_{i|k} + B\hat{u}_{i|k} &= \hat{x}_{i+1|k}, \quad \hat{x}_{0|k}{=}x_k, \label{eq:OnlOptProb2}\\
            F\hat{x}_{i|k} + G\hat{u}_{i|k}  &\le b - t_i,  \quad i\in \mathbb{N}_{0}^{N-1}, \label{eq:OnlOptProb3} \\
            Y\hat{x}_{N|k} &\le z - t_N, \label{eq:OnlOptProb4}
    \end{align}
\end{subequations}
where $  \hat{\mathbf{x}}_k {:=} [\hat{x}_{0|k}\tr,\ldots,\hat{x}_{N|k}\tr]\tr$ and $\hat{\mathbf{u}}_k{:=} [\hat{u}_{0|k}\tr,\ldots,\hat{u}_{N-1|k}\tr]\tr $ represent the nominal state and input trajectories, as predicted by the nominal dynamics in \eqref{eq:OnlOptProb2}. In addition, \eqref{eq:OnlOptProb4} is a terminal constraint imposed on the last state in the prediction horizon, and $l_N(\cdot) $ is a terminal cost function. The terms $\{t_i\}_{i=0}^{N}$ are tightenings imposed on the constraints along the prediction horizon and the terminal set. The MPC components $l_N(\cdot) $, $Y$, $z$ and $\{t_i\}_{i=0}^{N}$ are design choices, and must be computed offline, i.e, before \eqref{eq:OnlOptProb} is solved. 

Let the optimal solution to \eqref{eq:OnlOptProb} be defined by the superscript $( ^*)$ on the optimization variables. The MPC control law is then defined as $\pi(x_k) =\hat{u}_{0|k}^*$. That is, only the first control input from $\hat{\mathbf{u}}_k^*$ is applied to the system, and \eqref{eq:OnlOptProb} is solved again at the following time step. 

In order to achieve desired closed-loop properties, relevant design criteria must be imposed on the MPC components $l_N(\cdot) $, $Y$, $z$ and $\{t_i\}_{i=0}^{N}$.  The following sections present methods formulating such design criteria.

\section{Constraint satisfaction using recursive feasibility}\label{Sec:RCS_RF}
Two main properties that are desired to be satisfied by robust MPC controllers are (i)  constraint satisfaction and (ii) recursive feasibility. Existing robust MPC methods are designed by explicitly accounting for both (i) and (ii), for example, in \cite{chisci2001systems}. In this work, a novel design strategy is proposed such that satisfying (ii) implicitly ensures that (i) is satisfied. For this purpose, the notion of recursive feasibility (RF) is formally presented \cite{lofberg2012oops}. 

\vspace{2pt}
\begin{Definition}[Recursive feasibility] \label{Def:RecFeas} 
An MPC controller is recursively feasible if, for any state $x_k$ such that the MPC optimization is feasible, the optimal control input computed by MPC when applied to the system results in a state $x_{k+1}$ which is also feasible for the MPC optimization problem.
\end{Definition}
In the context of the system \eqref{eq:Dynamics}-\eqref{eq:wBound} and the MPC \eqref{eq:OnlOptProb}, RF means that  for every $x_k$ such that \eqref{eq:OnlOptProb} is feasible, the MPC \eqref{eq:OnlOptProb} is also feasible for the state $x_{k+1}$ under the control law $\pi(x_k) = \hat{u}_{0|k}^*$ and for any realization of $w_k \in \mathcal{W}$ and $\Delta_k \in \mathcal{P}$. 

Note that $\pi(x_k)$ uses the optimal solution of \eqref{eq:OnlOptProb}, which cannot be easily expressed as an explicit function of $x_k$. Thus, RF is difficult to use as a design criterion or even verify \cite{lofberg2012oops}. In the light of these difficulties, it is useful to consider a stronger condition, pertaining to only the feasibility of MPC, which will ensure that RF is satisfied. 
\vspace{2pt}
\begin{Definition}[Strong recursive feasibility] \label{Def:RecFeas2} 
An MPC controller is strongly recursively feasible if, for any state $x_k$ such that MPC optimization is feasible, any feasible control input computed by MPC when applied to the system results in a state $x_{k+1}$ which is also feasible for the MPC optimization problem.
\end{Definition}
Note that strong recursive feasibility (SRF) is a sufficient condition for RF, and has been referred to in literature as \emph{strong recursive feasibility} \cite[Definition 2.2]{lofberg2012oops}, or \emph{robust strong feasibility} \cite[Definition 12]{kerrigan2001Robust}. Because SRF does not depend on the optimality of the control input, it can be used in the design of MPC controllers. Although using SRF for design could lead to conservatism, it must be noted that  most existing robust MPC methods satisfy SRF \cite{parsi2022scalable,schwenkel2022}. 

In the context of the system \eqref{eq:Dynamics}-\eqref{eq:wBound} and the MPC \eqref{eq:OnlOptProb}, SRF can be formalized as follows. Let $\mathcal{U}(x_k)$ be the set of input sequences such that \eqref{eq:OnlOptProb2}-\eqref{eq:OnlOptProb4} are satisfied for a given state $x_k$. Additionally, define $\mathcal{U}_0(x_k)$  as the set of feasible inputs at the first predicted time step, that is,  
\begin{align}
\mathcal{U}_0(x_k) := \left\{\hat{u}_{0|k}\: \:|\:\:  \exists \{\hat{u}_{l|k}\}_{l=0}^{N-1} \in \mathcal U(x_k) \right\}.
\end{align}
The MPC \eqref{eq:OnlOptProb} is said to be SRF if, for every state $x_k$ such that \eqref{eq:OnlOptProb} is feasible, under any control input $u_k \in \mathcal{U}_0(x_k)$,  and for any realization of $w_k \in \mathcal{W}$ and $\Delta_k \in \mathcal{P}$,  the optimization \eqref{eq:OnlOptProb} is also feasible for the state $x_{k+1}$ following  the dynamics \eqref{eq:Dynamics}.

In this paper, we will rely on the fact that, under very mild conditions, RF implies closed-loop constraint satisfaction. That is, if the MPC scheme is recursively feasible, then the conditions under which constraints \eqref{eq:Constraints} are always satisfied by the closed-loop trajectories are trivial.  These statements are formalized for the closed loop \eqref{eq:Dynamics}-\eqref{eq:OnlOptProb} in the following theorem.
\vspace{2pt}
\begin{Theorem}\label{Th:RF_RCS}
    Let the optimization \eqref{eq:OnlOptProb} be recursively feasible and also be feasible at time $k=0$. In addition, let $t_0 \ge 0$. Then, the closed-loop trajectories resulting from the dynamics \eqref{eq:Dynamics} and MPC control policy $\pi(x_k)=\hat{u}_{0|0}^*$ never violate the constraints \eqref{eq:Constraints}.  
\end{Theorem}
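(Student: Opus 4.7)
The plan is to use induction on the time index $k$, leaning entirely on (a) the definition of recursive feasibility and (b) the particular structure of the first tightened constraint in \eqref{eq:OnlOptProb3} when $i=0$. The key observation is that the initial condition \eqref{eq:OnlOptProb2} forces $\hat{x}_{0|k}=x_k$, so the tightened constraint at index $i=0$ already talks about the \emph{actual} measured state, not just a predicted state. Hence closed-loop constraint satisfaction falls out for essentially free once feasibility is in hand.

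First I would establish the base case $k=0$: by hypothesis, \eqref{eq:OnlOptProb} is feasible at $k=0$, so there is an optimal solution with $\hat{x}_{0|0}^\ast=x_0$, $\hat{u}_{0|0}^\ast=\pi(x_0)=u_0$, satisfying $F\hat{x}_{0|0}^\ast+G\hat{u}_{0|0}^\ast\le b-t_0$. Since $t_0\ge 0$, this gives $Fx_0+Gu_0\le b$, i.e.\ $(x_0,u_0)\in\mathcal{C}$.

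For the inductive step, suppose \eqref{eq:OnlOptProb} is feasible at time $k$. Then $\pi(x_k)=\hat{u}_{0|k}^\ast\in\mathcal{U}_0(x_k)$ is applied, and the resulting successor state $x_{k+1}$ is produced by the true dynamics \eqref{eq:Dynamics} under some $w_k\in\mathcal{W}$ and $\Delta_k\in\mathcal{P}$. By Definition~\ref{Def:RecFeas} (recursive feasibility), \eqref{eq:OnlOptProb} is feasible at $x_{k+1}$. Repeating the argument of the base case at time $k+1$ gives $Fx_{k+1}+Gu_{k+1}\le b-t_0\le b$. Induction then yields $(x_k,u_k)\in\mathcal{C}$ for every $k\ge 0$, which is exactly the claim.

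The proof really is this short; there is no genuine obstacle. The only thing worth being careful about is noting that Definition~\ref{Def:RecFeas} is phrased in terms of the \emph{optimal} control $\hat{u}_{0|k}^\ast$, which matches the applied $\pi(x_k)$, so no appeal to the stronger SRF notion is needed. The assumption $t_0\ge 0$ is used exactly once at each step (to convert the tightened bound $b-t_0$ into the original bound $b$), which explains why it is stated as a hypothesis rather than being implicit in the constraint tightening.
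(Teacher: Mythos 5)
Your proof is correct and follows essentially the same route as the paper: establish $Fx_0+Gu_0\le b-t_0\le b$ from initial feasibility and $t_0\ge 0$, then use recursive feasibility to propagate feasibility forward and repeat the same one-step argument at every $k$. The paper states this slightly more informally (``by a similar argument'') rather than as an explicit induction, but the content is identical.
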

\begin{proof}
Because the optimization \eqref{eq:OnlOptProb} is feasible at time step $k{=}0$ and $t_0 {\ge} 0$, the applied input $u_0 = \hat{u}_{0|0}^*$  satisfies  
\begin{align}\label{eq:RecFeasproof}
    F x_0 + G u_0 \le b - t_0\le b.
\end{align}
Therefore, \eqref{eq:Constraints} is satisfied at time $k{=}0$. Moreover, because RF is satisfied, 
 the optimization \eqref{eq:OnlOptProb} always remains feasible under the MPC control law $\pi(x)$. By using a similar argument as the one presented above for $k{=}0$,  \eqref{eq:Constraints} is satisfied for all $k{\ge} 0$.
\end{proof}

It can be useful to stress here that Theorem \ref{Th:RF_RCS} -- beyond RF -- requires only that problem \eqref{eq:OnlOptProb} is initially feasible, and that the initial constraint \eqref{eq:OnlOptProb3} is ``properly" tightened, i.e. that $t_0{\geq} 0$. The latter requirement is arguably trivial.

\begin{Remark}
Theorem \ref{Th:RF_RCS} relies only on the feasibility of the solutions to \eqref{eq:OnlOptProb} and not on their optimality. As a consequence, a straightforward corollary is that Theorem \ref{Th:RF_RCS} also holds under SRF. Hence, SRF of the closed loop \eqref{eq:Dynamics}-\eqref{eq:OnlOptProb} will guarantee constraint satisfaction under any control input $u_k \in \mathcal{U}_0(x_k)$. That is, even if problem \eqref{eq:OnlOptProb} cannot be solved to optimality, constraint satisfaction is guaranteed by SRF.
\end{Remark}

Most robust MPC methods explicitly build outer bounds on worst-case trajectories of the system to ensure constraint satisfaction \cite{chisci2001systems,fleming2014robust}, which can be interpreted as open-loop approaches to robust MPC design. This is because trajectories the predicted by MPC in open loop are designed to lie inside the system constraints even with perturbations acting along the prediction horizon. Theorem \ref{Th:RF_RCS} enables a design which circumvents this construction altogether, and is the key element of the closed-loop design presented in this work. 

In the following sections, the MPC components $Y, z, \{t_i\}_{i=0}^{N}$ and $l_N(\cdot)$ will be designed such that SRF is satisfied. The objective is formulate the design of these components as an optimization problem that needs to be solved only once, offline. This offline optimization will be presented in  Section \ref{Sec:RobustMPCDesign}.

\section{Necessary and sufficient conditions for SRF}
In order to utilize Theorem \ref{Th:RF_RCS} for design, the SRF of \eqref{eq:OnlOptProb} needs to be guaranteed. In this section, necessary and sufficient conditions for SRF are derived. The constraints \eqref{eq:OnlOptProb2}-\eqref{eq:OnlOptProb4} can be compactly written as 
\begin{align}\label{eq:FeasCond_k}
    \mathbf{H}_{xu}\mathbf{S} \mathbf{s}_k \le \mathbf{b}-\mathbf{t},
\end{align}
where
\begin{align}\label{eq:Notation1}
&\mathbf{s}_k = \Big[x_k\tr~~\hat{\mathbf{u}}_{k}\tr\Big]\tr, ~\mathbf{H}_{xu}= 
\begin{bmatrix}
    I_N \otimes F & 0 & I_N \otimes G\\
    0 & Y & 0 
\end{bmatrix},
\nonumber\\
&\mathbf{S} =\left[ \begin{array}{c|cccc} 
    I & 0 & ... & ... & 0 \\ A & B & 0 & ... & 0  \\ \vdots & \vdots & \ddots & \ddots & \vdots \\ A^{N-1} & A^{N-2}B & ... & B & 0 \\ A^N & A^{N-1} B &...& AB &  B \\ 0 & \multicolumn{4}{c}{I_{Nn_u}} 
    \end{array}    \right]
    = \left[ \begin{array}{c|c} 
        \mathbf{S}_x &  \mathbf{S}_u 
    \end{array}    \right]
    \nonumber\\
&\mathbf{b} = \Big[\textbf{1}_N\tr\otimes b\tr~~z\tr\Big]\tr, ~\text{and }\mathbf{t} = [t_0~~t_1~~\hdots~~t_N]\tr.
\end{align}
One can readily observe that the vector $\mathbf{s}_k$ collects the initial state $x_k$ and input trajectory $\hat{\mathbf{u}}_k$. The matrix $\mathbf{S}$ maps the initial state $x_k$ and input trajectory $\hat{\mathbf{u}}_k$, i.e., $\mathbf{s}_k$, to the predicted \mbox{state-input} trajectories; more precisely, $\mathbf{Ss}_k = [\hat{\mathbf{x}}_k\tr~~\hat{\mathbf{u}}_k\tr]\tr$. The matrix $\mathbf{H}_{xu}$ maps the predicted state-input trajectories ($\mathbf{Ss}_k$) into the left-hand side of constraints \eqref{eq:OnlOptProb3}-\eqref{eq:OnlOptProb4}. Vectors $\mathbf{b}$ and $\mathbf{t}$ collect the right-hand side of constraints \eqref{eq:OnlOptProb3}-\eqref{eq:OnlOptProb4}. Let $\mathbb{P}_0$ denote the set of all feasible vectors $\mathbf{s}_k$ satisfying \eqref{eq:FeasCond_k}, that is,
\begin{align} \label{eq:Poytope_P0}
    \mathbb{P}_0 &:= \Set{\mathbf{s} \in \mathbb{R}^{n_x{+}N n_u}}{ \mathbf{H}_{xu}\mathbf{S} \mathbf{s} \le \mathbf{b}-\mathbf{t}}. 
\end{align}
Observe that SRF is satisfied if, for every $\mathbf{s}_k\in\mathbb{P}_0$, and for all possible realizations of $x_{k+1}$ under the control input $u_k = \hat{u}_{0|k}$, disturbance $w_k\in \mathcal{W}$ and perturbation $\Delta_k\in \mathcal{P}$, there exists an input sequence $\hat{\mathbf{u}}_{k+1}$ satisfying 
\begin{align}\label{eq:FeasCond_k+1}
    \mathbf{H}_{xu}\mathbf{S} \mathbf{s}_{k+1} \le \mathbf{b}-\mathbf{t},
\end{align}
where $\mathbf{s}_{k+1}=[x_{k+1}\tr~\hat{\mathbf{u}}_{k+1}\tr]\tr$. Note that (\ref{eq:FeasCond_k+1}) can be expressed in terms of $\mathbf{s}_{k}$, $w_k$, and $\Delta_k$ by re-writing the state-input trajectories $\mathbf{Ss}_{k+1}$ as
\begin{align}\label{eq:s_k+1}
    \mathbf{S} \mathbf{s}_{k+1} &=   \mathbf{S}_x x_{k+1} +  \mathbf{S}_u \hat{\mathbf{u}}_{k+1}, \nonumber\\
    &=  \mathbf{S}_x (A x_k + B u_k)  + 
   \mathbf{S}_x\big( B_w w_k +B_p \Delta_k (D_x x_k \nonumber\\ 
   &\quad + D_u u_k + D_w w_k)\big)+\mathbf{S}_u \hat{\mathbf{u}}_{k+1} \nonumber\\
       & =(\mathbf{L} {+} \mathbf{S}_x B_p \Delta_k  D_{xu}  ) \mathbf{s}_k \nonumber\\ 
     &\quad + \mathbf{S}_x (B_w {+} B_p \Delta_k D_w )w_k +  \mathbf{S}_u \hat{\mathbf{u}}_{k+1} \nonumber\\
     &=\mathbf{C_ss}_k + \mathbf{C_w}w_k + \mathbf{S}_u \hat{\mathbf{u}}_{k+1},
\end{align}
where
\begin{align}
&\mathbf{C}_{\mathbf{s}} = \mathbf{L} + \mathbf{S}_x B_p \Delta_k  D_{xu}, \: \: \mathbf{C}_{\mathbf{w}} = \mathbf{S}_x (B_w  + B_p \Delta_k D_w), \nonumber \\
&\mathbf{L} = \begin{bmatrix} 
    A & B & 0_{n_x,(N-1)n_u}   \\ A^2 & AB & 0_{n_x,(N-1)n_u} \\ \vdots & \vdots & \vdots \\ A^{N+1} & A^N B & 0_{n_x,(N-1)n_u} \\  \multicolumn{3}{c}{0_{Nn_u,n_x+Nn_u}} 
    \end{bmatrix}, 
    D_{xu} = \begin{bmatrix} D_x\\
    D_u
    \\
    0_{n_p,(N-1)n_u}
    \end{bmatrix}.
\end{align}
Note that the matrix $\mathbf{C}_{\mathbf{s}}$ maps $\mathbf{s}_k$ into the predicted state-input trajectories at time $k+1$. Similarly, $\mathbf{C}_{\mathbf{w}}$ provides the gain from disturbance $w_k$ at time $k$ to the predicted state-input trajectories at time $k+1$.

Using the above definitions, necessary and sufficient conditions for SRF are presented in the following theorem. 
\begin{Theorem}\label{Thm:NecSuff}
The closed loop system \eqref{eq:Dynamics}-\eqref{eq:OnlOptProb} satisfies SRF  if, and only if, for all $ \mathbf{s}_k {\in} \mathbb{P}_0$, $w_k{\in}\mathcal{W}$, and $ j{\in} \mathbb{N}_{1}^{n_\Delta}$, there exist $ \hat{\mathbf{u}}_{k+1}^j $ such that 
\begin{align}\label{eq:NecSuff}
     \mathbf{H}_{xu}  \left(\mathbf{C}^j_{\mathbf{s}} \mathbf{s}_k + \mathbf{C}^j_{\mathbf{w}} w_k + \mathbf{S_u}\hat{\mathbf{u}}_{k+1}^j \right)  \leq \mathbf{b}-\mathbf{t}, &
\end{align}
where $\mathbf{C}^j_{\mathbf{s}} = \mathbf{L} + \mathbf{S}_x B_p \Delta^j  D_{xu}$, and  $\mathbf{C}^j_{\mathbf{w}} = \mathbf{S}_x (B_w  + B_p \Delta^j D_w)$.
\end{Theorem}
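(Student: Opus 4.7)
The plan is to prove both directions by exploiting two structural observations: first, that $\mathcal{P}$ is the convex hull of its vertices $\{\Delta^j\}$; second, that the matrices $\mathbf{C}_{\mathbf{s}}$ and $\mathbf{C}_{\mathbf{w}}$ defined after \eqref{eq:s_k+1} depend \emph{affinely} on $\Delta_k$, while the feasibility inequality for the MPC at time $k+1$ is linear in $\hat{\mathbf{u}}_{k+1}$.

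For necessity, the argument is direct. Each vertex $\Delta^j$ lies in $\mathcal{P}$, so by Definition \ref{Def:RecFeas2} applied with the particular choice $\Delta_k = \Delta^j$, for every $\mathbf{s}_k \in \mathbb{P}_0$ and every $w_k \in \mathcal{W}$ there must exist some $\hat{\mathbf{u}}_{k+1}^j$ that makes \eqref{eq:FeasCond_k+1} hold. Substituting $\Delta_k = \Delta^j$ into the expression \eqref{eq:s_k+1} yields precisely \eqref{eq:NecSuff}, which produces the required vertex-wise feasible inputs.

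For sufficiency, fix an arbitrary $\mathbf{s}_k \in \mathbb{P}_0$, $w_k \in \mathcal{W}$, and $\Delta_k \in \mathcal{P}$. Write $\Delta_k = \sum_{j=1}^{n_\Delta} \lambda_j \Delta^j$ with $\lambda_j \geq 0$ and $\sum_j \lambda_j = 1$, and let $\hat{\mathbf{u}}_{k+1}^j$ be the vertex-wise feasible inputs guaranteed by hypothesis. I would then propose the convex combination $\hat{\mathbf{u}}_{k+1} := \sum_j \lambda_j \hat{\mathbf{u}}_{k+1}^j$ as a feasibility certificate. Using $\sum_j \lambda_j = 1$, the affine dependence on $\Delta_k$ yields $\mathbf{C}_{\mathbf{s}} = \sum_j \lambda_j \mathbf{C}_{\mathbf{s}}^j$ and $\mathbf{C}_{\mathbf{w}} = \sum_j \lambda_j \mathbf{C}_{\mathbf{w}}^j$, so that
\begin{align*}
\mathbf{H}_{xu}\bigl(\mathbf{C}_{\mathbf{s}}\mathbf{s}_k + \mathbf{C}_{\mathbf{w}} w_k + \mathbf{S}_u \hat{\mathbf{u}}_{k+1}\bigr)
&= \sum_{j=1}^{n_\Delta} \lambda_j \,\mathbf{H}_{xu}\bigl(\mathbf{C}_{\mathbf{s}}^j \mathbf{s}_k + \mathbf{C}_{\mathbf{w}}^j w_k + \mathbf{S}_u \hat{\mathbf{u}}_{k+1}^j\bigr) \\
&\leq \sum_{j=1}^{n_\Delta} \lambda_j (\mathbf{b} - \mathbf{t}) = \mathbf{b} - \mathbf{t},
\end{align*}
where the inequality uses the hypothesis \eqref{eq:NecSuff} component-wise. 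Combined with the state update \eqref{eq:s_k+1}, this establishes the feasibility requirement \eqref{eq:FeasCond_k+1}, hence SRF.

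I do not expect any real obstacle. The only point that requires care is that the constant terms $\mathbf{L}$ in $\mathbf{C}_{\mathbf{s}}$ and $\mathbf{S}_x B_w$ in $\mathbf{C}_{\mathbf{w}}$ are independent of $\Delta_k$, so the decomposition into a convex combination of vertex matrices relies crucially on $\sum_j \lambda_j = 1$; this is the mild bookkeeping step where I would be explicit. Everything else reduces to linearity of $\mathbf{H}_{xu}(\cdot)$ and $\mathbf{S}_u(\cdot)$ and the fact that a convex combination of vectors each dominated by $\mathbf{b} - \mathbf{t}$ is itself dominated by $\mathbf{b} - \mathbf{t}$.
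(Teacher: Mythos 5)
Your proposal is correct and follows essentially the same route as the paper's proof: necessity by instantiating SRF at each vertex $\Delta_k=\Delta^j$, and sufficiency by writing $\Delta_k$ as a convex combination of the vertices, taking the matching convex combination of the vertex-wise inputs $\hat{\mathbf{u}}_{k+1}^j$, and exploiting the affine dependence of $\mathbf{C}_{\mathbf{s}}$ and $\mathbf{C}_{\mathbf{w}}$ on $\Delta_k$ together with $\sum_j\lambda_j=1$. The bookkeeping point you flag about the constant terms $\mathbf{L}$ and $\mathbf{S}_x B_w$ is precisely the step the paper carries out explicitly in its expansion of $\mathbf{S}\mathbf{s}_{k+1}$.
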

\begin{proof}
The idea of the proof is briefly described first. By definition, SRF requires that  for every $\mathbf{s}_k {\in} \mathbb{P}_0$, $w_k {\in} \mathcal{W}$ and $\Delta_k {\in} \mathcal{P}$, there exists an input sequence $\hat{\mathbf{u}}_{k+1}$ satisfying \eqref{eq:FeasCond_k+1}. Note that the qualifier for all $\Delta_k \in \mathcal{P}$ can be rewritten as for all $ j {\in} \mathbb{N}_{1}^{n_\Delta} $, with $\Delta_k{=}\Delta^j$. Thus, existence of independent input sequences $\hat{\mathbf{u}}_{k+1}^j$, each corresponding to $\Delta_k{=}\Delta^j$, ensures SRF. Using \eqref{eq:FeasCond_k+1} and \eqref{eq:s_k+1}, this condition can be written as \eqref{eq:NecSuff}. The detailed proof follows.
\\[1mm]
\textit{Sufficiency:}\\
Let there exist inputs $ \hat{\mathbf{u}}_{k+1}^j $  such that \eqref{eq:NecSuff} is satisfied for all $ \mathbf{s}_k {\in} \mathbb{P}_0$, $w_k{\in}\mathcal{W}$, and $ j{\in} \mathbb{N}_{1}^{n_\Delta}$. Moreover, let the true perturbation $\Delta_k$ be given as 
\begin{align}\label{eq:Delta_k}
    \Delta_k = \sum_{j=1}^{n_\Delta} \tau_j \Delta^j, 
\end{align}
where $ \sum_{j=1}^{n_\Delta} \tau_j = 1$ and $\tau_j \ge 0$. Such a representation is possible as $\Delta_k$ lies inside $\mathcal{P}$ defined in \eqref{eq:DeltaBound}. Consider the input candidate sequence $\hat{\mathbf{u}}_{k+1} =  \sum_{j=1}^{n_\Delta} \tau_j \hat{\mathbf{u}}_{k+1}^j$ for verifying  \eqref{eq:FeasCond_k+1}. Then, using \eqref{eq:s_k+1}, 
\begin{align}\label{eq:Suff_NecSuff}
    &\mathbf{S} \mathbf{s}_{k+1}  \nonumber\\
    &=(\mathbf{L} {+} \mathbf{S}_x B_p \Delta_k  D_{xu}  ) \mathbf{s}_k  {+}  
     \mathbf{S}_u \hat{\mathbf{u}}_{k{+}1} {+} \mathbf{S}_x (B_w {+} B_p \Delta_k D_w )w_k \nonumber\\
     &= (\mathbf{L} {+} \mathbf{S}_x B_p \sum_{j=1}^{n_\Delta} \tau_j \Delta^j  D_{xu}  ) \mathbf{s}_k  {+}  
     \mathbf{S}_u \sum_{j=1}^{n_\Delta}\tau_j  \hat{\mathbf{u}}_{k{+}1}^j {+} \nonumber\\
     & \qquad \qquad \mathbf{S}_x (B_w {+} B_p \sum_{j=1}^{n_\Delta} \tau_j \Delta^j D_w )w_k \nonumber\\
     &= \sum_{j=1}^{n_\Delta}\tau_j \big( (\mathbf{L} {+} \mathbf{S}_x B_p  \Delta^j  D_{xu}  ) \mathbf{s}_k  {+}   \mathbf{S}_u \hat{\mathbf{u}}_{k{+}1}^j {+}   \nonumber\\
     & \qquad \qquad \mathbf{S}_x (B_w {+} B_p \Delta^j D_w )w_k \big)\nonumber\\
     &= \sum_{j=1}^{n_\Delta}\tau_j \left(\mathbf{C}^j_{\mathbf{s}} \mathbf{s}_k + \mathbf{S_u}\hat{\mathbf{u}}_{k+1}^j  + \mathbf{C}^j_{\mathbf{w}} w_k \right).
\end{align}
Therefore, using \eqref{eq:NecSuff} and \eqref{eq:Suff_NecSuff},  
\begin{align}
   & \mathbf{H}_{xu}  \mathbf{S} \mathbf{s}_{k+1}  =  \mathbf{H}_{xu} \sum_{j=1}^{n_\Delta}\tau_j \left(\mathbf{C}^j_{\mathbf{s}} \mathbf{s}_k + \mathbf{S_u}\hat{\mathbf{u}}_{k+1}^j  + \mathbf{C}^j_{\mathbf{w}} w_k \right) \nonumber\\
    & = \sum_{j=1}^{n_\Delta}\tau_j \mathbf{H}_{xu}  \left(\mathbf{C}^j_{\mathbf{s}} \mathbf{s}_k + \mathbf{S_u}\hat{\mathbf{u}}_{k+1}^j  + \mathbf{C}^j_{\mathbf{w}} w_k \right) \nonumber\\
    &\le \sum_{j=1}^{n_\Delta}\tau_j (\mathbf{b}-\mathbf{t})  = \mathbf{b}-\mathbf{t}, 
\end{align}
where the inequality is obtained as \eqref{eq:NecSuff} is satisfied by the inputs $\hat{\mathbf{u}}_{k+1}^j$.
Thus, \eqref{eq:FeasCond_k+1} is satisfied for all $\Delta_k{\in} \mathcal{P}$, $\mathbf{s}_k {\in }\mathbb{P}_0$, and $w_k {\in} \mathcal{W}$, and the closed-loop satisfies SRF. 

\noindent\textit{Necessity: }\\
Let the closed-loop \eqref{eq:Dynamics}-\eqref{eq:OnlOptProb} satisfy SRF. That is, for all $\mathbf{s}_k {\in} \mathbb{P}_0$, $\Delta_k{\in} \mathcal{P}$ and $w_k {\in} \mathcal{W}$, there exists $\hat{\mathbf{u}}_{k+1}$ satisfying \eqref{eq:FeasCond_k+1}. Then, consider that for each $j \in \mathbb{N}_{1}^{n_\Delta}$, $\hat{\mathbf{u}}_{k+1}^j$ represents the feasible input sequences for \eqref{eq:FeasCond_k+1} when  $\Delta_k{=}\Delta^j$, and for $\mathbf{s}_k{\in }\mathbb{P}_0 $, $w_k {\in} \mathcal{W}$. Using \eqref{eq:FeasCond_k+1} and \eqref{eq:s_k+1}, one obtains
\begin{align}
    \mathbf{H}_{xu}  (\mathbf{L} {+} \mathbf{S}_x B_p \Delta^j  D_{xu}  ) \mathbf{s}_k  +\mathbf{S}_u \hat{\mathbf{u}}_{k{+}1}^j + \quad & \nonumber \\
      \mathbf{S}_x (B_w {+} B_p \Delta^j D_w )w_k &\le \mathbf{b}-\mathbf{t} \nonumber\\
     \implies \mathbf{H}_{xu}  \left(\mathbf{C}^j_{\mathbf{s}} \mathbf{s}_k + \mathbf{C}^j_{\mathbf{w}} w_k + \mathbf{S_u}\hat{\mathbf{u}}_{k+1}^j \right) &\le \mathbf{b}-\mathbf{t},
\end{align}
proving that \eqref{eq:NecSuff} is satisfied for all $j {\in} \mathbb{N}_{1}^{n_\Delta}$.
\end{proof}
Although Theorem 2 provides necessary and sufficient conditions for the characterization of SRF, \eqref{eq:NecSuff} is difficult to directly use in the design of the MPC components
$Y$,$z$ and $\mathbf{t}$. The difficulty arises as one must guarantee the existence of input sequences satisfying \eqref{eq:NecSuff} for each $\mathbf{s}_k \in \mathbb{P}_0$ and $w_k \in \mathcal{W}$.

Theorem \ref{Thm:NecSuff} explicitly captures the conditions under which SRF holds. Whereas Theorem \ref{Thm:NecSuff} formulates these conditions as existence of input sequences, the criteria can be written as set inclusions using ideas from set-invariance \cite{blanchini2008set}. The early work \cite{kerrigan2001Robust} presents necessary and sufficient conditions to achieve SRF when uncertainty occurs as additive disturbances. However, the resulting conditions are not suited for design. Recently, analogous conditions were formulated for multiplicative perturbations in \cite{abdelsalam2021synthesis}, where polytope projections were used to verify SRF, given the MPC parameters. 

\section{Sufficient conditions for SRF using feedback}\label{Sec:SuffCond_Feedback}
Theorem \ref{Thm:NecSuff} is difficult to directly use as a design condition because of the requirement to guarantee the existence of input sequences. In this section, such a guarantee is achieved by parameterizing the inputs using feedback gains.

Note that Theorem \ref{Thm:NecSuff} allows one to design independent input sequences $\hat{\mathbf{u}}_{k+1}^j $  for each $\Delta^j$. This is a powerful feature, and differs from most existing open-loop strategies, where a single feedback law is designed to compensate for all perturbations and disturbances \cite{parsi2022scalable}.  Exploiting the fact that $n_\Delta$ independent input sequences are sufficient to ensure SRF,  for each $j\in \mathbb{N}_{1}^{n_\Delta}$, the input sequences $\hat{\mathbf{u}}_{k+1}^j $ are parameterized as
\begin{align}\label{eq:InputParameterization}
    \hat{u}_{i|k+1}^j &= \hat{u}_{i+1|k}^j + M_i^j B_w w_k + K^j_{i,\Delta} y_k, \: \forall i \in \mathbb{N}_{0}^{N-2}, \nonumber\\
    \hat{u}_{N-1|k+1}^j &= K^j \hat{x}_{N|k} + M_{N{-}1}^j B_w w_k  + K^j_{N{-}1,\Delta} y_k, 
\end{align}
where $y_k = \begin{bmatrix}
x_k\tr & u_k\tr \end{bmatrix}\tr$, and for $j\in \mathbb{N}_{1}^{n_\Delta}$ and $i \in \mathbb{N}_{0}^{N-
1}$,  $K^j$ represents a terminal state feedback gain, $M_i^j$ represents a disturbance feedback gain and $K_{i,\Delta}^j$ represents a perturbation feedback gain. The use of disturbance feedback gains is common in robust MPC \cite{goulart2006optimization}. Moreover, each $K_{i,\Delta}^j$ compensates for the effect of $\Delta^j$ on the nominal trajectory. The term $y_k$ is used for feedback as the effect of the perturbation $\Delta_k$  on the nominal trajectories also depends on $y_k$ (see \eqref{eq:s_k+1}).

Condition \eqref{eq:NecSuff} can now be rewritten as the existence of suitable feedback gains, which can be solved for offline. The following notation is introduced to account for the newly introduced feedback gains. Let for all $j\in\mathbb{N}_{1}^{n_\Delta}$
\begin{align}
    A_K^j &:= A + B K^j, \: 
    \mathbf{M}^j := \begin{bmatrix} {M_0^j}\tr & {M_1^j}\tr & \hdots & {M_{N{-}1}^j}\tr\end{bmatrix}\tr,\nonumber\\
    \mathbf{K}^j_{\Delta} &:= \begin{bmatrix}  {K^j_{0,\Delta}}\tr & {K^j_{1,\Delta}}\tr & \hdots & {K^j_{N-1,\Delta}}\tr   \end{bmatrix} \tr,\nonumber\\
    \mathbf{L}_K^j &:= \begin{bmatrix}
    A & B & 0  & ... & 0 \\ A^2 & AB & B & ... & 0 \\ \vdots & \vdots & \vdots & \ddots & \vdots \\ A^N & A^{N-1} B & A^{N-2} B &... &  B \\ A_K^j A^N & A_K^j A^{N-1} B & A_K^j A^{N-2} B &... &  A_K^j B \\[0.2cm] 
    0 & 0 & \multicolumn{3}{c}{\quad \quad I_{(N-1)n_u}}  \\[0.2cm] 
    K^j A^N & K^j A^{N-1} B & K^j A^{N-2} B &... &  K^j B
    \end{bmatrix}\nonumber, \\
    \mathbf{C}_K^j &:= \mathbf{L}_K^j + \mathbf{S}_x B_p \Delta^j  D_{xu} + \nonumber\\
     &\qquad \quad \mathbf{S}_u \mathbf{K}^j_{\Delta} \begin{bmatrix}
        I_{n_x+n_u}  & 0_{n_x+n_u, n_u(N-1)} 
    \end{bmatrix}\nonumber, \\
    \mathbf{C}_M^j &:= \mathbf{S}_x (B_w  + B_p \Delta^j D_w )+ \mathbf{S}_u\mathbf{M}^jB_w.
\end{align}
Here matrix $\mathbf{L}_K^j$ maps a pair of initial state and the nominal input trajectory (that is, $\mathbf{s}_k$) into the corresponding nominal predicted state-input trajectory at the next time step, under the parameterization \eqref{eq:InputParameterization}. Matrix $\mathbf{C}_K^j$  maps  $\mathbf{s}_k$ into a perturbed state-input trajectory at the next time step, when the perturbation  $\Delta_k= \Delta^j$  and the input is parameterized as \eqref{eq:InputParameterization}. Similarly, $\mathbf{C}_M^j$ maps the effects of a disturbance $w_k$ on the state-input trajectory at the next time step. 

Consider the following polytopes in $\mathbb{R}^{n_x+Nn_u+n_w}$
\begin{align}
    \mathbb{P}_1 &:= \Set*{\begin{bmatrix}
        \mathbf{s} \\ w
    \end{bmatrix} \: }{\: \begin{bmatrix}
        \mathbf{H}_{xu}\mathbf{S} & 0 \\
        0 & H_w
    \end{bmatrix}
    \begin{bmatrix}
        \mathbf{s} \\ w
    \end{bmatrix} {\leq}
    \begin{bmatrix}
        \mathbf{b}-\mathbf{t} \\ h_w
    \end{bmatrix} }, 
    \nonumber \\ 
    \text{and }
    \mathbb{P}_2^j &:= \Set*{\begin{bmatrix}
        \mathbf{s} \\ w
    \end{bmatrix} }{\mathbf{H}_{xu}\begin{bmatrix}
        \mathbf{C}_K^j &   \mathbf{C}_M^j
    \end{bmatrix}
    \begin{bmatrix}
        \mathbf{s} \\ w
    \end{bmatrix} \leq \mathbf{b}-\mathbf{t}  }, \label{eq:Poytope_P1_P2j} 
\end{align}
for $j\in \mathbb{N}_{1}^{n_\Delta}$. Polytopes $\mathbb{P}_1$ and $\mathbb{P}_2^j$ are depicted in Figure \ref{fig:CLR4}, and can be interpreted in the following manner. Recall that $\mathbb{P}_0$ is the set of feasible states and input sequences for MPC, as defined in \eqref{eq:Poytope_P0}. Then, $\mathbb{P}_1$ is the augmentation of $\mathbb{P}_0$ with the set of possible disturbances \eqref{eq:wBound}. Moreover, $\mathbb{P}_2^j$ is the set of states, inputs and disturbances acting at time step $k$ such that, if $\Delta_{k} {=} \Delta^j$, $\mathbf{s}_{k+1}$ lies inside $\mathbb{P}_0$. That is, $\mathbb{P}_2^j$ is the pre-image of $\mathbb{P}_0$ under the dynamics \eqref{eq:Dynamics} with $\Delta_{k} {=} \Delta^j$, $w_k{\in} \mathcal{W}$ and $u_k$ as in \eqref{eq:InputParameterization}. 

Therefore, it can be seen that, if $\mathbb{P}_1\subseteq \mathbb{P}_2^j$ and $\Delta_{k} {=} \Delta^j$, for any $s_k\in \mathbb{P}_0$ and $w_k\in\mathcal{W}$, $s_{k+1}$ lies inside $\mathbb{P}_0$. Using this approach, polytopic set inclusions can be defined to guarantee SRF as shown in the following theorem.

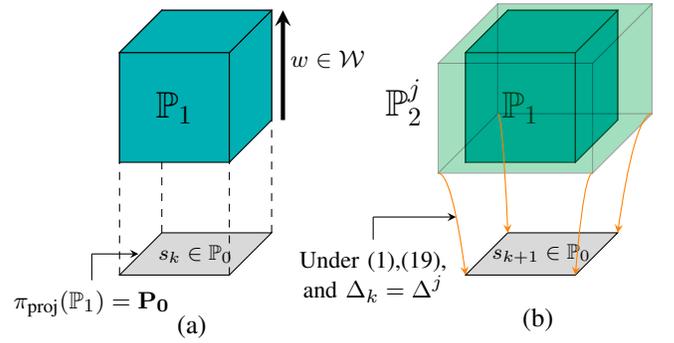
\begin{figure}
\begin{tikzpicture}[font=\small]
    \begin{scope}[yslant=0,shift={(-1.,0)}]
    \tikzmath{\a1=1.46; \b1=\a1/2; \d1=1.5; \d2=1.2;}
        \coordinate (sk0) at (0,-\d1,0);
    	\coordinate (sk1) at (\a1,-\d1,0);
    	\coordinate (sk2) at (\a1,-\d1,\a1);
    	\coordinate (sk3) at (0,-\d1,\a1);
    	\filldraw[fill=black!15] (sk0)--(sk1)--(sk2)--(sk3)--cycle;
    	\draw[black,dashed] (sk3) -- (0,0,\a1);
    	\draw[black,dashed] (sk0) -- (0,0,0);
    	\draw[black,dashed] (sk1) -- (\a1,0,0);
    	\draw[black,dashed] (sk2) -- (\a1,0,\a1);
        \filldraw[fill=TealBlue] (0,0,0) -- (0,0,\a1) -- (0,\a1,\a1) -- (0,\a1,0) -- cycle;
        \filldraw[fill=TealBlue] (0,0,0) -- (0,\a1,0) -- (\a1,\a1,0) -- (\a1,0,0) -- cycle;
    	\filldraw[fill=TealBlue] (0,0,\a1) -- (0,\a1,\a1) -- (\a1,\a1,\a1) -- (\a1,0,\a1) -- cycle;
    	\filldraw[fill=TealBlue] (\a1,0,0) -- (\a1,0,\a1) -- (\a1,\a1,\a1) -- (\a1,\a1,0) -- cycle;
    	\filldraw[fill=TealBlue] (0,\a1,\a1) -- (\a1,\a1,\a1) -- (\a1,\a1,0) -- (0,\a1,0) -- cycle;
    	\node at (\b1,\b1,\a1) {\Large$\mathbb{P}_1$};
    	\node at (\b1,-\d1,\b1) {\footnotesize$s_k\in\mathbb{P}_0$};
    	\draw[ultra thick,-stealth] (\a1+0.15,0,0) -- (\a1+0.15,\a1,0);
    	\node[anchor=west] at (\a1+0.13,\b1+0.0725,0) {$w\in\mathcal{W}$};
    	\coordinate (p1) at (-0.03,-\d1,\b1);
    	\coordinate (p2) at (-0.65,-\d1,\b1);
    	\coordinate (p3) at (-0.65,-\d2-0.65,\b1);
    	\draw[-stealth] (p2)--(p1); \draw (p3)--(p2);
    	\node[anchor=north] at (p3) {\begin{tabular}{c}$\pi_{\text{proj}}(\mathbb{P}_1)=\mathbf{P_0}$\end{tabular}};
    	\node at (0.4,-2.75) {\normalsize (a)};
    \end{scope}
    \begin{scope}[yslant=0,shift={(3.6,0,0)}]
        \tikzmath{\a1=1.46; \b1=\a1/2;\a2=1.45;\s2=0.4; \d1=1.5; \d2=1.2;}
        \coordinate (sk0) at (0,-\d1,0);
    	\coordinate (sk1) at (\a1,-\d1,0);
    	\coordinate (sk2) at (\a1,-\d1,\a1);
    	\coordinate (sk3) at (0,-\d1,\a1);
    	\filldraw[fill=black!15] (sk0)--(sk1)--(sk2)--(sk3)--cycle;
        \filldraw[fill=TealBlue] (0,0,0) -- (0,0,\a1) -- (0,\a1,\a1) -- (0,\a1,0) -- cycle;
        \filldraw[fill=TealBlue] (0,0,0) -- (0,\a1,0) -- (\a1,\a1,0) -- (\a1,0,0) -- cycle;
    	\filldraw[fill=TealBlue] (0,0,\a1) -- (0,\a1,\a1) -- (\a1,\a1,\a1) -- (\a1,0,\a1) -- cycle;
    	\filldraw[fill=TealBlue] (\a1,0,0) -- (\a1,0,\a1) -- (\a1,\a1,\a1) -- (\a1,\a1,0) -- cycle;
    	\filldraw[fill=TealBlue] (0,\a1,\a1) -- (\a1,\a1,\a1) -- (\a1,\a1,0) -- (0,\a1,0) -- cycle;
    	\node at (\b1,\b1,\a1) {\Large$\mathbb{P}_1$};
    	\node at (\b1,-\d1,\b1) {\footnotesize$s_{k+1}\in\mathbb{P}_0$};
    	
    	\tikzmath{\a11=1.46; \b1=\a11/2;\a22=2.05;\s22=0.2; \d11=1.5;\a33=2.05; \d22=1.2;}
        \coordinate (ss0) at (-\s22,0,-\s22);
        \coordinate (ss1) at (\a22-\s22,0,-\s22);
        \coordinate (ss22) at (\a22-\s22,0,\a33-\s22);
        \coordinate (ss3) at (-\s22,0,\a33-\s22);
        \coordinate (label) at (-\s22,\a22/2,\a22);
        \node[anchor=east] at (label) {\Large$\mathbb{P}_2^j$};
        \filldraw[fill=Green,opacity=0.2] (-\s22,0,-\s22) -- (-\s22,0,\a33-\s22) -- (-\s22,\a11,\a33-\s22) -- (-\s22,\a11,-\s22) -- cycle;
        \filldraw[fill=Green,opacity=0.2] (-\s22,0,-\s22) -- (-\s22,\a11,-\s22) -- (\a22-\s22,\a11,-\s22) -- (\a22-\s22,0,-\s22) -- cycle;
    	\filldraw[fill=Green,opacity=0.2] (-\s22,0,\a33-\s22) -- (-\s22,\a11,\a33-\s22) -- (\a22-\s22,\a11,\a33-\s22) -- (\a22-\s22,0,\a33-\s22) -- cycle;
    	\filldraw[fill=Green,opacity=0.2] (\a22-\s22,0,-\s22) -- (\a22-\s22,0,\a33-\s22) -- (\a22-\s22,\a11,\a33-\s22) -- (\a22-\s22,\a11,-\s22) -- cycle;
    	\filldraw[fill=Green,opacity=0.2] (-\s22,\a11,\a33-\s22) -- (\a22-\s22,\a11,\a33-\s22) -- (\a22-\s22,\a11,-\s22) -- (-\s22,\a11,-\s22) -- cycle;
    	\filldraw[fill=Green,opacity=0.2] (-\s22,0,-\s22) -- (\a22-\s22,0,-\s22) -- (\a22-\s22,0,\a33-\s22) -- (-\s22,0,\a33-\s22) -- cycle;
    	\draw[orange,-stealth] (ss0) parabola (sk0);
    	\draw[orange,-stealth] (ss1) parabola (sk1);
    	\draw[orange,-stealth] (ss22) parabola (sk2);
    	\draw[orange,-stealth] (ss3) parabola (sk3);
    	\coordinate (l1) at (0.1,-0.5,\a22);
    	\coordinate (l2) at (-1,-0.5,\a22);
    	\coordinate (l3) at (-1,-0.8,\a22);
    	\draw[-stealth] (l2) -- (l1); \draw (l3)--(l2);
    	\node[anchor=north] at (l3) {\begin{tabular}{c}Under \eqref{eq:Dynamics},\eqref{eq:InputParameterization},\\ and $\Delta_k=\Delta^j$\end{tabular}};
    	
        \node at (0.4,-2.65) {\normalsize (b)};
    \end{scope}
\end{tikzpicture}
\caption{(a) Projection of $\mathbb{P}_1$ on the space of $s_k$ overlaps with $\mathbb{P}_0$ by definition. (b) $\mathbb{P}_2^j$ contains all $(s_k,w_k)$ such that $s_{k+1} \in \mathbb{P}_0$  holds under the dynamics \eqref{eq:Dynamics}, with $\Delta_k=\Delta^j$, $w_k\in\mathcal{W}$  and input parameterization \eqref{eq:InputParameterization}. }\label{fig:CLR4}
\end{figure}

\begin{Theorem}\label{Thm:SuffCond}
    Consider the polytopes $\mathbb{P}_1$ and $\{\mathbb{P}_2^j\}_{j=1}^{n_{\Delta}}$  proposed in \eqref{eq:Poytope_P1_P2j}. Then, the closed loop \eqref{eq:Dynamics}-\eqref{eq:OnlOptProb} satisfies SRF if there exist feedback gains $\{K^j\}_{j=1}^{n_\Delta}$, $\{ \{M_i^j\}_{j=i}^{n_{\Delta}} \}_{i=0}^{N-1}$ and $\{ \{ K_{i,\Delta}^j\}_{j=i}^{n_{\Delta}}\}_{i=0}^{N-1} $  such that
\begin{align}\label{eq:SuffCond}
     \mathbb{P}_1 \subseteq \mathbb{P}_2^j, \quad \forall j \in \mathbb{N}_1^{n_{\Delta}}.
\end{align}
\end{Theorem}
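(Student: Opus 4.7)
The plan is to reduce Theorem \ref{Thm:SuffCond} to the necessary and sufficient characterization already established in Theorem \ref{Thm:NecSuff}. Specifically, I would show that under the hypothesis \eqref{eq:SuffCond}, one can construct, for every $\mathbf{s}_k \in \mathbb{P}_0$, $w_k \in \mathcal{W}$ and $j \in \mathbb{N}_1^{n_\Delta}$, an input sequence $\hat{\mathbf{u}}_{k+1}^j$ satisfying condition \eqref{eq:NecSuff}; SRF then follows directly from Theorem \ref{Thm:NecSuff}.

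First I would fix arbitrary $\mathbf{s}_k \in \mathbb{P}_0$, $w_k \in \mathcal{W}$ and $j \in \mathbb{N}_1^{n_\Delta}$, and take $\hat{\mathbf{u}}_{k+1}^j$ to be the sequence produced by the parameterization \eqref{eq:InputParameterization} using the feedback gains $K^j$, $M_i^j$, $K_{i,\Delta}^j$ supplied by the hypothesis. The crux of the proof is then the algebraic identity
\begin{align*}
\mathbf{C}^j_{\mathbf{s}} \mathbf{s}_k + \mathbf{C}^j_{\mathbf{w}} w_k + \mathbf{S}_u \hat{\mathbf{u}}_{k+1}^j = \mathbf{C}_K^j \mathbf{s}_k + \mathbf{C}_M^j w_k,
\end{align*}
which asserts that the predicted state-input trajectory $\mathbf{S}\mathbf{s}_{k+1}$ obtained by propagating the model \eqref{eq:Dynamics} with $\Delta_k = \Delta^j$ and inputs \eqref{eq:InputParameterization} coincides with the image of $(\mathbf{s}_k, w_k)$ under the precomputed matrices $\mathbf{C}_K^j$ and $\mathbf{C}_M^j$ of Section \ref{Sec:SuffCond_Feedback}.

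Once this identity is in hand, the rest is immediate. Since $(\mathbf{s}_k, w_k) \in \mathbb{P}_1$ by construction, the inclusion $\mathbb{P}_1 \subseteq \mathbb{P}_2^j$ gives $\mathbf{H}_{xu}(\mathbf{C}_K^j \mathbf{s}_k + \mathbf{C}_M^j w_k) \le \mathbf{b} - \mathbf{t}$; combined with the identity, this is precisely condition \eqref{eq:NecSuff}. As $\mathbf{s}_k$, $w_k$, $j$ were arbitrary, Theorem \ref{Thm:NecSuff} yields SRF.

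The main obstacle is the block-wise verification of the algebraic identity. One must match the contribution of each component of $\hat{\mathbf{u}}_{k+1}^j$ to the corresponding row block of $\mathbf{L}_K^j$: the shifted nominal terms $\hat{u}_{i+1|k}^j$ combined with $\mathbf{L}\mathbf{s}_k$ must reproduce the first $N$ state rows and the $N{-}1$ shift rows of $\mathbf{L}_K^j \mathbf{s}_k$; the terminal feedback $\hat{u}_{N-1|k+1}^j = K^j \hat{x}_{N|k}$ must simultaneously produce the last input row $K^j \hat{x}_{N|k}$ and, after one more application of $A$, the terminal state row $A_K^j \hat{x}_{N|k}$; and the feedback terms $M_i^j B_w w_k$ and $K_{i,\Delta}^j y_k$ must generate the $\mathbf{S}_u \mathbf{M}^j B_w w_k$ and $\mathbf{S}_u \mathbf{K}^j_{\Delta}\begin{bmatrix} I_{n_x+n_u} & 0\end{bmatrix}\mathbf{s}_k$ contributions. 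The identity itself is unsurprising given how $\mathbf{L}_K^j$, $\mathbf{C}_K^j$, and $\mathbf{C}_M^j$ are defined, but the careful notational bookkeeping it demands is the delicate step of the argument.
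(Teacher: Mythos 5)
Your proof is correct and follows essentially the same route as the paper: both hinge on the same algebraic identity showing that the parameterization \eqref{eq:InputParameterization} turns $\mathbf{C}^j_{\mathbf{s}}\mathbf{s}_k + \mathbf{C}^j_{\mathbf{w}}w_k + \mathbf{S}_u\hat{\mathbf{u}}^j_{k+1}$ into $\mathbf{C}_K^j\mathbf{s}_k + \mathbf{C}_M^j w_k$, after which the inclusion $\mathbb{P}_1\subseteq\mathbb{P}_2^j$ delivers the constraint inequality. The only (harmless) organizational difference is that you factor the extension from the vertices $\Delta^j$ to general $\Delta_k\in\mathcal{P}$ through the sufficiency direction of Theorem \ref{Thm:NecSuff}, whereas the paper repeats the convex-combination argument inline in \eqref{eq:SuffCondPf1}--\eqref{eq:SuffCondPf2}.
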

\begin{proof}
For \eqref{eq:OnlOptProb} to be strongly recursively feasible, \eqref{eq:FeasCond_k+1} must be satisfied for every $\begin{bmatrix}
        \mathbf{s}_k \tr & w_k\tr
    \end{bmatrix}\tr \in \mathbb{P}_1$ and $\Delta_k \in \mathcal{P}$. 

Let $\Delta_k $ be defined as in \eqref{eq:Delta_k} and a candidate control input sequence be defined as $\hat{\mathbf{u}}_{k+1} =  \sum_{j=1}^{n_\Delta} \tau_j \hat{\mathbf{u}}_{k+1}^j$, where $\hat{\mathbf{u}}_{k+1}^j$ are parameterized according to \eqref{eq:InputParameterization}. 
Substituting the candidate sequence in \eqref{eq:s_k+1}, 
\begin{subequations}\label{eq:SuffCondPf1}
\begin{align}
    &\mathbf{S} \mathbf{s}_{k+1} = (\mathbf{L} {+} \mathbf{S}_x B_p \sum_{j=1}^{n_\Delta} \tau_j \Delta^j  D_{xu}  ) \mathbf{s}_k  \nonumber \\
     & {+} \mathbf{S}_x (B_w {+} B_p \sum_{j=1}^{n_\Delta} \tau_j\Delta^j D_w )w_k  +\mathbf{S}_u  \sum_{j=1}^{n_\Delta} \tau_j \hat{\mathbf{u}}_{k+1}^j \label{eq:SuffCondPf1_1}\\
     &= \sum_{j=1}^{n_\Delta} \tau_j \bigg( \bigg( \mathbf{L}_K^j + \mathbf{S}_x B_p \Delta^j  D_{xu}  \nonumber\\ 
     &\quad + \mathbf{S}_u \mathbf{K}^j_{\Delta} \begin{bmatrix}
        I_{n_x+n_u}  & 0_{n_x+n_u, n_u(N-1)}
    \end{bmatrix}\bigg) \mathbf{s}_k \nonumber \\     
    & \quad + \textbf{S}_x \big( B_w + B_p\Delta^jD_w \big) w_k + \mathbf{S}_u\mathbf{M}^j B_w w_k
        \bigg) \label{eq:SuffCondPf1_2}\\
        & = \sum_{j=1}^{n_\Delta} \tau_j \bigg(\mathbf{C}_K^j \mathbf{s}_k + \mathbf{C}_M^j w_k  \bigg) . \label{eq:SuffCondPf1_3}
\end{align}
\end{subequations}
In \eqref{eq:SuffCondPf1}, \eqref{eq:SuffCondPf1_2} is obtained by substituting the control law \eqref{eq:InputParameterization} in \eqref{eq:SuffCondPf1_1} and using the definition of $\mathbf{L}_K^j$. Using \eqref{eq:SuffCondPf1}, it can be seen that
\begin{subequations}\label{eq:SuffCondPf2}
\begin{align}
   & H_{xu} \mathbf{S} \mathbf{s}_{k+1} = H_{xu}  \sum_{j=1}^{n_\Delta} \tau_j \bigg(\mathbf{C}_K^j \mathbf{s}_k + \mathbf{C}_M^j w_k  \bigg) \label{eq:SuffCondPf2_1}\\
   &=    \sum_{j=1}^{n_\Delta} \tau_j H_{xu} \big( \mathbf{C}_K^j \mathbf{s}_k + \mathbf{C}_M^j w_k  \big) \label{eq:SuffCondPf2_2}\\
   &\le  \sum_{j=1}^{n_\Delta} \tau_j (\mathbf{b}-\mathbf{t}) \le \mathbf{b}-\mathbf{t}, \label{eq:SuffCondPf2_3}
\end{align}
\end{subequations}
where \eqref{eq:SuffCondPf2_3} is obtained because any $\begin{bmatrix}
        \mathbf{s}_k \tr & w_k\tr
    \end{bmatrix}\tr $ inside $ \mathbb{P}_1$ also lies inside $\mathbb{P}_2^j$ from \eqref{eq:SuffCond}. 
Therefore, \eqref{eq:FeasCond_k+1} is satisfied for all feasible realizations of $\mathbf{s}_k$, $\Delta_k$ and $w_k$ under the proposed feedback if \eqref{eq:SuffCond} holds, and so the closed loop \eqref{eq:Dynamics}-\eqref{eq:OnlOptProb} satisfies SRF.
\end{proof}

Observe that unlike \eqref{eq:NecSuff}, \eqref{eq:SuffCond} does not depend on the state $x_k$. Thus, \eqref{eq:SuffCond} can be used to compute the MPC components offline. The following result based on Farkas' Lemma will be used to reformulate \eqref{eq:SuffCond}.
\vspace{0.3cm}
\begin{Lemma}\cite[Lemma~5.6]{kouvaritakis2015model}\label{Lem:Farkas}
    Let $\mathbb{X}_i = \Set{x}{H_i x \le h_i}$ for $i=1,2$ be two polytopes. Then, $\mathbb{X}_1 \subseteq \mathbb{X}_2 $ is satisfied if, and only if, there exists a matrix $\Lambda$ such that
    \begin{align}\label{eq:Farkas}
       \Lambda \ge 0, \quad \Lambda H_1 = H_2, \quad \Lambda h_1 \le h_2.
    \end{align}
\end{Lemma}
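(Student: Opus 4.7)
The plan is to prove the two directions of the ``if and only if'' separately. The sufficiency direction is essentially a one-line substitution, while necessity requires invoking the classical Farkas' Lemma (in its affine ``Theorem of Alternatives'' form) applied row-by-row to $H_2$.

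For sufficiency, I would assume that some $\Lambda$ satisfying the three conditions in \eqref{eq:Farkas} exists, and pick an arbitrary $x \in \mathbb{X}_1$. Since $H_1 x \le h_1$ and the entries of $\Lambda$ are nonnegative, left-multiplication preserves the component-wise inequality, giving $\Lambda H_1 x \le \Lambda h_1$. Substituting $\Lambda H_1 = H_2$ on the left and chaining with $\Lambda h_1 \le h_2$ on the right yields $H_2 x \le h_2$, i.e.\ $x \in \mathbb{X}_2$. This shows $\mathbb{X}_1 \subseteq \mathbb{X}_2$.

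For necessity, I would argue row-by-row. Assuming $\mathbb{X}_1 \subseteq \mathbb{X}_2$, fix any row index $i$ and note that every $x$ with $H_1 x \le h_1$ must satisfy $[H_2]_i\, x \le [h_2]_i$. The affine form of Farkas' Lemma states that an implication of the form ``$H_1 x \le h_1 \Rightarrow c\tr x \le d$'' (with the premise feasible) holds if and only if there exists a vector $\lambda \ge 0$ with $H_1\tr \lambda = c$ and $h_1\tr \lambda \le d$. Applying this with $c\tr = [H_2]_i$ and $d = [h_2]_i$ produces a nonnegative vector $\lambda_i$ with $\lambda_i\tr H_1 = [H_2]_i$ and $\lambda_i\tr h_1 \le [h_2]_i$. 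Stacking the $\lambda_i\tr$ as the rows of $\Lambda$ then delivers a matrix satisfying all three conditions of \eqref{eq:Farkas} simultaneously.

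The main obstacle is the necessity direction, specifically the invocation of the appropriate variant of Farkas' Lemma, and handling the degenerate case in which $\mathbb{X}_1$ is empty (where the inclusion is vacuously true). In that degenerate case one would either choose $\Lambda = 0$ (if $H_2 = 0$ can be arranged) or use a separate Farkas' certificate of infeasibility of $H_1 x \le h_1$ to construct a row of $\Lambda$ with arbitrarily large mass that still satisfies $\Lambda h_1 \le h_2$ because $h_1\tr \lambda$ can be made arbitrarily negative. Since the paper considers only nonempty feasible sets (the MPC polytopes $\mathbb{P}_1$ and $\mathbb{P}_2^j$ in \eqref{eq:Poytope_P1_P2j} are assumed nonempty for the design to be meaningful), the standard affine Farkas argument suffices, and no additional technical work is required beyond the row-wise application outlined above.
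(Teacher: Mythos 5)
Your proof is correct and is the standard argument: the sufficiency direction by nonnegative left-multiplication and the necessity direction by a row-wise application of the affine Farkas' lemma are exactly how this containment criterion is established in the cited reference; the paper itself imports Lemma \ref{Lem:Farkas} from \cite[Lemma~5.6]{kouvaritakis2015model} without proof, so there is no in-paper argument to compare against. One small caveat: your sketch for the degenerate case $\mathbb{X}_1=\emptyset$ does not actually work in full generality (e.g.\ with $H_1=0$, $h_1<0$ and $H_2\neq 0$ the equality $\Lambda H_1=H_2$ is unsatisfiable for any $\Lambda$), but as you correctly note, the polytopes to which the lemma is applied in this paper are nonempty, so the standard argument you give is all that is needed.
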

Note that the first inequality in \eqref{eq:Farkas} is element-wise.

\begin{Proposition}\label{Prop:LambdaRefo}
The set inclusions \eqref{eq:SuffCond} are satisfied if, and only if, there exist matrices $\{ \Lambda^j\}_{j=1}^{n_{\Delta}}$ such that 
    \begin{subequations}\label{eq:LambdaRefo}
    \begin{align}
        \Lambda^j \ge 0, \quad
    \Lambda^j &\begin{bmatrix}
        \mathbf{b}-\mathbf{t} \\ h_w
    \end{bmatrix} \le \mathbf{b}-\mathbf{t},
    \label{eq:LambdaRefo_1}\\
        \Lambda^j \begin{bmatrix}
        \mathbf{H}_{xu}\mathbf{S} & 0 \\
        0 & H_w
    \end{bmatrix} &= \mathbf{H}_{xu}\begin{bmatrix}
        \mathbf{C}_K^j &   \mathbf{C}_M^j
    \end{bmatrix},  \: \: \forall j \in \mathbb{N}_{1}^{n_{\Delta}}. \label{eq:LambdaRefo_2}
    \end{align}
    \end{subequations}
\end{Proposition}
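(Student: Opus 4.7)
The plan is to recognize that Proposition~\ref{Prop:LambdaRefo} is essentially a direct, per-index application of Farkas' Lemma (Lemma~\ref{Lem:Farkas}) to the pair of polytopes $(\mathbb{P}_1,\mathbb{P}_2^j)$ introduced in \eqref{eq:Poytope_P1_P2j}. Since \eqref{eq:SuffCond} asserts the inclusion $\mathbb{P}_1 \subseteq \mathbb{P}_2^j$ for every $j \in \mathbb{N}_{1}^{n_\Delta}$, and each such inclusion is independent of the others (the right-hand polytope depends only on the fixed index $j$), I would first separate the statement into $n_\Delta$ individual inclusions and handle them one at a time with the same template.

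For a fixed $j$, I would then identify the half-space descriptions required by Lemma~\ref{Lem:Farkas}. From \eqref{eq:Poytope_P1_P2j}, the polytope $\mathbb{P}_1$ is already in the form $\{v \mid H_1 v \le h_1\}$ with
\[
H_1 = \begin{bmatrix} \mathbf{H}_{xu}\mathbf{S} & 0 \\ 0 & H_w \end{bmatrix},\qquad h_1 = \begin{bmatrix} \mathbf{b}-\mathbf{t} \\ h_w \end{bmatrix},
\]
and $\mathbb{P}_2^j$ is in the form $\{v \mid H_2^j v \le h_2^j\}$ with $H_2^j = \mathbf{H}_{xu}\begin{bmatrix} \mathbf{C}_K^j & \mathbf{C}_M^j \end{bmatrix}$ and $h_2^j = \mathbf{b}-\mathbf{t}$. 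Invoking Lemma~\ref{Lem:Farkas} yields that $\mathbb{P}_1 \subseteq \mathbb{P}_2^j$ holds if and only if there exists a matrix $\Lambda^j$ satisfying $\Lambda^j \ge 0$, $\Lambda^j H_1 = H_2^j$, and $\Lambda^j h_1 \le h_2^j$, which is literally the system \eqref{eq:LambdaRefo_1}--\eqref{eq:LambdaRefo_2}. Collecting these conditions over all $j \in \mathbb{N}_1^{n_\Delta}$ produces the family $\{\Lambda^j\}_{j=1}^{n_\Delta}$ claimed in the proposition, giving both directions simultaneously.

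There is essentially no obstacle to overcome here beyond careful bookkeeping of the block structure: one should verify that the column dimensions of $H_1$ and $H_2^j$ agree (both act on vectors in $\mathbb{R}^{n_x + N n_u + n_w}$, by construction of $\mathbb{P}_1$ and $\mathbb{P}_2^j$), so that $\Lambda^j H_1 = H_2^j$ is well-posed. Once this is checked, the proof reduces to writing ``apply Lemma~\ref{Lem:Farkas} with the matrices above, for each $j$'' and noting that the resulting conditions are exactly \eqref{eq:LambdaRefo}. The only part of the argument requiring any comment is that the first inequality in \eqref{eq:LambdaRefo_1} is understood element-wise, consistent with the convention already stated after Lemma~\ref{Lem:Farkas}.
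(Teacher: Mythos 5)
Your proposal is correct and matches the paper's proof, which likewise consists of a direct application of Lemma~\ref{Lem:Farkas} to each inclusion $\mathbb{P}_1 \subseteq \mathbb{P}_2^j$ separately; your identification of $H_1$, $h_1$, $H_2^j$, $h_2^j$ is exactly the intended instantiation. The additional bookkeeping remarks about matching column dimensions and the element-wise inequality are accurate but not needed beyond what the paper states.
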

\begin{proof}
    The proof follows from a direct application of Lemma \ref{Lem:Farkas} to each set inclusion in \eqref{eq:SuffCond}.
\end{proof}

Conditions \eqref{eq:LambdaRefo} will be used in the following section to build a robust MPC scheme satisfying SRF. 

\section{Design of robust MPC satisfying SRF}\label{Sec:RobustMPCDesign}
In this section, \eqref{eq:LambdaRefo} will be used as constraints in the design of the MPC components to ensure SRF and therefore constraint satisfaction. We present viable algorithmic solutions to enforce \eqref{eq:LambdaRefo} while minimizing the tightening of the constraints, so as to promote a larger region of attraction for the closed-loop problem.

\subsection{Terminal components}
It can already be seen that Theorem \ref{Th:RF_RCS} imposes the constraint $t_0\geq 0$. Additionally, Proposition \ref{Prop:LambdaRefo} explicitly formulates constraints under which the MPC problem satisfies SRF. The MPC components in \eqref{eq:OnlOptProb} which are to be computed are  $Y, z, \mathbf{t}$, and the terminal cost $l_N(\cdot)$. The design for SRF does not depend on the choice of the cost function, and hence, $l_N(\cdot)$ will be defined later. 

Ideally, all the components $Y, z, \mathbf{t}$ would be set as decision variables in the offline design problem. However, this would result in large number of bilinear constraints in \eqref{eq:LambdaRefo}, as  $Y$ is a component of $\mathbf{H}_{xu}$ defined in \eqref{eq:Notation1}. Thus, in this work, a popular strategy from the literature is used to choose $Y$ and $z$ a priori \cite{kouvaritakis2015model}. Specifically, they are chosen to be of the structure of maximal robust positively invariant sets under a terminal feedback law $K_Y$, given as
\begin{align}\label{eq:TermSetPara}
    Y := \begin{bmatrix}
        F+G K_Y \\
        (F+G K_Y) (A+B K_Y)\\
        \vdots
        \\ 
        (F+G K_Y) (A+B K_Y)^{k'}
    \end{bmatrix}, z:=\begin{bmatrix}
        b \\
        b \\
        \vdots \\
        b
    \end{bmatrix},
\end{align}
where $K_Y$ is a feedback chosen such that $A+BK_Y$ is Schur stable, and $k'$ is a positive scalar chosen by the designer depending on the desired flexibility.  Although the components $Y$ and $z$ are fixed, the tightenings $t_N$ allow the optimizer to modify the terminal set.

\subsection{Offline optimization for constraint tightening}
Given an uncertain model \eqref{eq:Dynamics}-\eqref{eq:wBound} and constraint set \eqref{eq:Constraints}, the terminal set is computed as in \eqref{eq:TermSetPara}, and the following optimization problem can be solved to compute the tightenings, $\mathbf{t}$. 
\begin{align}\label{eq:OfflineOpt_gen}
\begin{split}
    \min_{\substack{\mathbf{t}, \\
   \{ K^j, \Lambda^j,  \mathbf{M}^j,    \mathbf{K}^j_{\Delta} \}_{j=1}^{n_{\Delta}} }} &l_t(\mathbf{t})\\
    \text{s.t.} \qquad \eqref{eq:LambdaRefo}, \quad &t_0 \ge 0, \quad f_t(\mathbf{t}) \le 0.
\end{split}
\end{align}
In \eqref{eq:OfflineOpt_gen}, $l_t(\cdot)$ represents an objective function, and $f_t(\cdot)$ represents additional constraints which can be imposed on the tightenings based on the desired goal of the control application. Moreover, the constraint $t_0 \ge 0$ needs to be explicitly enforced, so that recursive feasibility of \eqref{eq:OnlOptProb} results in robust constraint satisfaction, as shown in Theorem \ref{Th:RF_RCS}. 
Note that the guarantees derived in Theorem \ref{Thm:SuffCond} and Proposition \ref{Prop:LambdaRefo} are independent of $l_t(\cdot)$ and $f_t(\cdot)$, as long as the constraints \eqref{eq:LambdaRefo} are satisfied.

Here, we describe one possible strategy of choosing $l_t(\cdot)$ and $f_t(\cdot)$, where the design objective is to maximize the region of attraction (ROA) of the closed-loop system. The ROA is defined by the feasible region of the state for the MPC optimization \eqref{eq:OnlOptProb}, and thus is a convex polytope. Because the volume of a generic polytope is difficult to compute  even when the hyperplanes are known a priori \cite{lawrence1991polytope}, maximizing ROA is a difficult task. Under the chosen structure of the optimization problem \eqref{eq:OnlOptProb}, minimization of $\norm{\mathbf{t}}_2^2$ can be used as an approximation for maximizing ROA as proposed in \cite{parsi2022computationally}. 

However, solely approximating ROA maximization with minimization of $\norm{\mathbf{t}}_2^2$ can sometimes lead to poor results. In this work, the approximation is improved by additionally maximizing the size of an $l_1$-norm ball that can fit inside the feasible region of the state space. Consider the vertices of the unit-$l_1$-norm ball in $\mathbb{R}^{n_x}$, represented by $\{x^i\}_{i=1}^{2n_x}$, which lie on the positive and negative directions of the principal axes of the state space. For each vertex of the $l_1$-norm ball, define the corresponding MPC optimization variable $\mathbf{s}^i_{\alpha}$ as
\begin{equation}
    \mathbf{s}^{i}_{\alpha} = \begin{bmatrix}
        \alpha {x^i} \\ \mathbf{u}^i
    \end{bmatrix}, \quad \forall i \in \mathbb{N}_{1}^{2n_x}
\end{equation}
where $\alpha$ is a positive scalar defining the size of the $l_1$-norm ball, and $\mathbf{u}^i$ represents a feasible input sequence for the $i^{\text{th}}$ vertex. Then, the offline optimization problem can be written as
\begin{subequations}\label{eq:OfflineOpt}
\begin{align}
    \min_{\substack{\mathbf{t},  \alpha, \{\mathbf{u}^{i}\}_{i=1}^{2n_x}, \\
   \{ \Lambda^j, K^j,   \mathbf{M}^j,    \mathbf{K}^j_{\Delta} \}_{j=1}^{n_{\Delta}} }} &\norm{\mathbf{t}}^2_2 - \mu \alpha \label{eq:OfflineOpt1}\\
    \text{s.t.} \qquad &     \Lambda^j \ge 0, \quad \alpha > 0, \quad t_0 \ge 0, \label{eq:OfflineOpt2}\\
        \Lambda^j \begin{bmatrix}
        \mathbf{H}_{xu}\mathbf{S} & 0 \\
        0 & H_w
    \end{bmatrix} &= \mathbf{H}_{xu}\begin{bmatrix}
        \mathbf{C}_K^j &   \mathbf{C}_M^j
    \end{bmatrix}, \label{eq:OfflineOpt3}\\
    \Lambda^j \begin{bmatrix}
        \mathbf{b}-\mathbf{t} \\ h_w
    \end{bmatrix} &\le \mathbf{b}-\mathbf{t}, \quad \forall j \in \mathbb{N}_{1}^{n_{\Delta}}, \label{eq:OfflineOpt4}\\
    \mathbf{H}_{xu} \mathbf{S} \mathbf{s}^{i}_{\alpha} &\le \mathbf{b-t}, \quad \forall i \in \mathbb{N}_{1}^{2n_x}, \label{eq:OfflineOpt5}
\end{align}
\end{subequations}
where  $\mu$ is a positive weighting factor between minimizing $\norm{\mathbf{t}}_2$ and maximizing $\alpha$.
Additionally, \eqref{eq:OfflineOpt5} represent the feasibility constraints for each vertex of the $l_1$-norm ball. 

The optimization problem \eqref{eq:OfflineOpt} is non-convex, due to the presence of bilinear constraints \eqref{eq:OfflineOpt4}.  Although this increases the complexity of design,  problem \eqref{eq:OfflineOpt} needs to be solved only once, in the offline phase. Note that the number of bilinearities is greatly reduced because $Y$ and $z$ are chosen beforehand, and consequently \eqref{eq:OfflineOpt3} are affine constraints. Moreover, given a good initial guess of the tightenings $\mathbf{t}$, feasible solutions to all the other optimization variables can be computed by solving a convex optimization problem (by fixing $\mathbf{t}$ in \eqref{eq:OfflineOpt}). This provides a simple way to initialize non-convex optimization techniques which can solve \eqref{eq:OfflineOpt}. 

\vspace{0.1cm}
\begin{Remark}[Initial guess]\label{Rem:InGuess}
It can be difficult to compute a feasible initial guess for the tightenings  $\mathbf{t}$ in \eqref{eq:OfflineOpt}, as the tightenings are problem dependent. In practice, one strategy which produced good results was to use an existing constraint tightening technique, such as \cite{chisci2001systems} or \cite{parsi2022computationally} to guess $\mathbf{t}$. Note that these techniques were designed for systems affected by additive disturbances alone, and may not always result in feasible initial guesses for $\mathbf{t}$. 
\end{Remark}

\subsection{Cost functions and online optimization}
Whereas the constraints in \eqref{eq:OnlOptProb} are designed to ensure recursive feasibility, the cost function must be chosen to guarantee closed-loop stability. A quadratic tracking cost formulation is presented here for simplicity, but the method proposed here can easily be extended to economic cost functions.
Consider the stage cost defined as
\begin{equation}
    l(x,u) = x\tr Q_x x + u\tr Q_u u,
\end{equation}
where the matrices $Q_x, Q_u$ are positive definite. The design task is to choose a terminal cost of the form $l_N(x) = x\tr Q_N x$ such that the closed-loop system is stable. This can be achieved by choosing $Q_N$ to be a positive definite matrix satisfying $ \forall j\in \mathbb{N}_{1}^{n_\Delta}$
\begin{equation}\label{eq:TermCostLyap}
     (1+\epsilon)\mathbf{C}_K^{j\intercal}  Q_s \mathbf{C}_K^j  - \mathbf{S}\tr  Q_s \mathbf{S} \prec -\begin{bmatrix}
         P_x & 0 \\ 0 & P_u
     \end{bmatrix}  = -P,
\end{equation}
where $Q_s:= \text{diag}\{Q_x \otimes I_N, Q_N, Q_u\otimes I_N \}$, $\epsilon\in \mathbb{R}_{>0}$ is a design parameter and $P_x\in \mathbb{R}^{n_x \times n_x}$, $P_u\in \mathbb{R}^{Nn_u \times Nn_u}$ are positive definite matrices. The values of $\epsilon$ and $P$ affect the resulting stability margin as will be shown Section \ref{Sec:AlgProp}. For a given $\epsilon$ and known values of the feedback gains $\{ K^j, \mathbf{M}^j,    \mathbf{K}^j_{\Delta} \}_{j=1}^{n_{\Delta}}$, constraints \eqref{eq:TermCostLyap} are linear matrix inequalities in the variables $Q_N$ and $ P$. There are multiple strategies which can be used to choose $Q_N$ such that \eqref{eq:TermCostLyap} holds, for example, by minimizing the trace of $Q_N$ or minimizing the deviation of $Q_N$ from the infinite horizon LQR gain.

The online optimization can then be written as
\begin{subequations}\label{eq:OnlOptProb_reg}
    \begin{align}
        \min_{\hat{\mathbf{x}}_k,\hat{\mathbf{u}}_k}  \quad  \hat{x}_{N|k}\tr Q_N \hat{x}_{N|k} + &\displaystyle \sum_{i=0}^{N-1} \left( \hat{x}_{i|k}\tr Q_x \hat{x}_{i|k} +  \hat{u}_{i|k}\tr Q_u \hat{u}_{i|k} \right) \label{eq:OnlOptProb_reg1} \\
		\text{s.t.} \quad A\hat{x}_{i|k} + B\hat{u}_{i|k} &= \hat{x}_{i|k+1}, \quad \hat{x}_{k,0}=x_k, \label{eq:OnlOptProb_reg2}\\
            F\hat{x}_{i|k} + G\hat{u}_{i|k}  &\le b - t_i,  \quad i\in \mathbb{N}_{0}^{N-1}, \label{eq:OnlOptProb_reg3} \\
            Y\hat{x}_{N|k} &\le z - t_N. \label{eq:OnlOptProb_reg4}
    \end{align}
\end{subequations}
The computational complexity of \eqref{eq:OnlOptProb_reg} is the same as that of a nominal MPC problem. This enables a large computational advantage compared to existing robust MPC methods, such as tube MPC \cite{schwenkel2022}. 

\subsection{Algorithm and properties}\label{Sec:AlgProp}
Using the offline \eqref{eq:OfflineOpt} and online \eqref{eq:OnlOptProb_reg} optimization problems, a robust MPC algorithm is formally described in Algorithm \ref{Alg:RMPC}. It can be seen that the offline phase involves the computation of $\mathbf{t}$ by solving  \eqref{eq:OfflineOpt}, and subsequently choosing the terminal cost matrix such that \eqref{eq:TermCostLyap} holds.  Although a less conservative approach would be to include \eqref{eq:TermCostLyap} as an additional constraint in \eqref{eq:OfflineOpt}, this results in a semi-definite program with bilinear constraints, and is difficult to solve. An alternative approach is to first compute $Q_N$ and $\{ K^j, \mathbf{M}^j,    \mathbf{K}^j_{\Delta} \}_{j=1}^{n_{\Delta}}$ satisfying \eqref{eq:TermCostLyap}, and then solving \eqref{eq:OfflineOpt} with the resulting control gains. 

\begin{algorithm}[h]
\caption{Closed-loop robust MPC}\label{Alg:RMPC}
\begin{algorithmic}[1]
\Statex \textbf{Offline:} 
\State Choose $N, K_Y$, and compute $Y$ and $z$
\State Choose $\mu$ and $\epsilon$
\State Solve  \eqref{eq:OfflineOpt}  to compute $\mathbf{t}$ and control gains
\State Choose $Q_N$ satisfying \eqref{eq:TermCostLyap} 
\setcounter{ALG@line}{0}
\Statex \textbf{Online:} At each time-step $k\ge0$:
\State Obtain the measurement $x_k$
\State Solve \eqref{eq:OnlOptProb_reg}
\State Apply $\pi(x_k)=\hat{u}^*_{k,0}$ 
\end{algorithmic}
\end{algorithm}

The properties of the proposed algorithm will now be discussed. For this purpose, the notion of regional input-to-state stability (ISS) is first defined.
\begin{Definition}[Regional ISS in $\mathbb{X}$ \cite{limon2009input}]
    Given a system with dynamics \eqref{eq:Dynamics}-\eqref{eq:wBound} and a compact set
$\mathbb{X}\subseteq \mathbb{R}^{n_x}$ including the origin as an interior point, the system is said to be regionally ISS in $\mathbb{X}$ with
respect to $w_k$ if $\mathbb{X}$ is a robust positively invariant set and if there exist a $\mathcal{KL}$ function $\beta(\cdot, \cdot)$ and a $\mathcal{K}$ function $\lambda_1(\cdot)$  such that, for all $x_0 \in \mathbb{X}$ and $k \ge 0$
\begin{equation}\label{eq:ISS}
    \norm{x_k} \le \beta(\norm{x_0},k) + \lambda_1(\norm{\mathbf{w}_{[0:k-1]}}) ,
\end{equation}
where $\norm{\mathbf{w}_{[0:k-1]}} = \sup_{i\in\mathbb{N}_{0}^{k-1}} \norm{w_i}$.
\end{Definition}
A sufficient condition to guarantee ISS of a system is the existence of an ISS Lyapunov function \cite{limon2009input}, defined below. 
\begin{Definition}
    Consider the system \eqref{eq:Dynamics}-\eqref{eq:wBound} and a compact set $\mathbb{X}\subseteq \mathbb{R}^{n_x}$ including the origin as an interior point. A function $V:\mathbb{R}^n \rightarrow \mathbb{R}_{\ge 0}$ is called an ISS Lyapunov function in $\mathbb{X}$ with respect to $w_k$ if there exist $\mathcal{K}_{\infty}$ functions $\sigma_1 (\cdot),\sigma_2(\cdot),\sigma_3(\cdot)$ and a $\mathcal{K}$ function $\lambda_2(\cdot)$ such that $\forall x_k \in \mathbb{X}$
    \begin{subequations}\label{eq:ISSLyap}
    \begin{align}
    \sigma_1(\norm{x_k}) \le V(x_k) &\le \sigma_2(\norm{x_k}), \label{eq:ISSLyap1} \\
    V(x_{k+1}) - V(x_k)  &\le -\sigma_3(\norm{x_k}) + \lambda_2(w_k).\label{eq:ISSLyap2}
    \end{align}
    \end{subequations}
\end{Definition}

\begin{Theorem}\label{Thm:Props}
    Let the offline optimization problem \eqref{eq:OfflineOpt} be feasible, the cost function satisfy \eqref{eq:TermCostLyap}, and the online optimization \eqref{eq:OnlOptProb_reg} have a feasible solution at time $k=0$. Moreover, let $\mathbb{P}_{0,x}$ be the projection of $\mathbb{P}_0$ on the first $n_x$ elements of $\mathbf{s}_k$, and let the origin be an interior point of $\mathbb{P}_{0,x}$. Then, the closed loop formed by a system with dynamics \eqref{eq:Dynamics}-\eqref{eq:wBound} and the robust MPC controller in Algorithm \ref{Alg:RMPC} satisfies the following properties:
    \begin{enumerate}[label=(\alph*)]
        \item The problem \eqref{eq:OnlOptProb_reg} remains feasible for all $k > 0$. \label{Thm:Props1}
        \item The constraints \eqref{eq:Constraints} are satisfied for all $k > 0$.\label{Thm:Props2}
        \item The closed-loop system is ISS in $\mathbb{P}_{0,x}$ with respect to $w_k$.\label{Thm:Props3}
    \end{enumerate} 
\end{Theorem}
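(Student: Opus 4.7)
The plan is to handle the three properties in order; \ref{Thm:Props1} and \ref{Thm:Props2} follow quickly from results already established in the paper, while \ref{Thm:Props3} needs the main technical work. For \ref{Thm:Props1}, the offline constraints \eqref{eq:OfflineOpt3}--\eqref{eq:OfflineOpt4} together with $\Lambda^j \ge 0$ are exactly the Farkas certificates \eqref{eq:LambdaRefo}, so Proposition \ref{Prop:LambdaRefo} and Theorem \ref{Thm:SuffCond} give SRF of the closed loop. Because the MPC law returns $u_k = \hat{u}_{0|k}^* \in \mathcal{U}_0(x_k)$ whenever \eqref{eq:OnlOptProb_reg} is feasible at $x_k$, SRF then yields feasibility at $x_{k+1}$ for every $w_k\in\mathcal{W}$ and $\Delta_k\in\mathcal{P}$, and induction from the assumed feasibility at $k=0$ establishes \ref{Thm:Props1}. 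Since the offline program also imposes $t_0\ge 0$, Theorem \ref{Th:RF_RCS} then delivers \ref{Thm:Props2}, and as a by-product $\mathbb{P}_{0,x}$ is robustly positively invariant.

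For \ref{Thm:Props3} I would take $V(x_k) := J^*(x_k)$, the optimal value of \eqref{eq:OnlOptProb_reg}, as an ISS Lyapunov candidate. The lower bound in \eqref{eq:ISSLyap1} is immediate from the $i{=}0$ stage cost, giving $V(x_k) \ge \rho_{\min}(Q_x)\norm{x_k}^2$, so $\sigma_1(s) = \rho_{\min}(Q_x)s^2$. For the upper bound, $J^*$ is the value function of a parametric strictly convex QP with polyhedral feasible region and is therefore continuous on the compact set $\mathbb{P}_{0,x}$; together with $J^*(0)=0$ (the origin is a feasible equilibrium of the nominal dynamics with zero cost) and the fact that $0$ is interior to $\mathbb{P}_{0,x}$, a standard compactness argument yields a $\mathcal{K}_\infty$ bound $\sigma_2(\norm{x_k})$.

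The decrease condition \eqref{eq:ISSLyap2} is the main step, and is where the terminal-cost LMI \eqref{eq:TermCostLyap} is used. Mirroring the proof of Theorem \ref{Thm:SuffCond}, I write $\Delta_k = \sum_j \tau_j \Delta^j$ and take the candidate $\hat{\mathbf{u}}_{k+1}^c := \sum_j \tau_j \hat{\mathbf{u}}_{k+1}^j$ with each $\hat{\mathbf{u}}_{k+1}^j$ parameterized through \eqref{eq:InputParameterization}; the induced trajectory satisfies $\mathbf{S}\mathbf{s}_{k+1}^c = \sum_j \tau_j(\mathbf{C}_K^j \mathbf{s}_k^* + \mathbf{C}_M^j w_k)$. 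Sub-optimality combined with Jensen's inequality applied to the convex quadratic $y\mapsto y\tr Q_s y$ gives
\begin{equation*}
V(x_{k+1}) \le \sum_{j=1}^{n_\Delta} \tau_j \bigl(\mathbf{C}_K^j \mathbf{s}_k^* + \mathbf{C}_M^j w_k\bigr)\tr Q_s \bigl(\mathbf{C}_K^j \mathbf{s}_k^* + \mathbf{C}_M^j w_k\bigr).
\end{equation*}
Expanding the square and applying Young's inequality to the cross term with exactly the $\epsilon$ of \eqref{eq:TermCostLyap} produces the $(1{+}\epsilon)$ coefficient in front of $\mathbf{C}_K^{j\intercal} Q_s \mathbf{C}_K^j$, so the design LMI substitutes in directly to give
\begin{equation*}
V(x_{k+1}) - V(x_k) \le -\mathbf{s}_k^{*\intercal} P \mathbf{s}_k^* + \bigl(1{+}\tfrac{1}{\epsilon}\bigr)\max_j w_k\tr \mathbf{C}_M^{j\intercal} Q_s \mathbf{C}_M^j w_k .
\end{equation*}
Bounding $\mathbf{s}_k^{*\intercal} P \mathbf{s}_k^* \ge \rho_{\min}(P_x)\norm{x_k}^2$ supplies $\sigma_3(s) = \rho_{\min}(P_x)s^2$, and the residual is bounded by $c\norm{w_k}^2$ for some $c>0$, supplying $\lambda_2$; the converse Lyapunov result in \cite{limon2009input} then yields regional ISS on $\mathbb{P}_{0,x}$. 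The main obstacle is this decrease step, specifically arranging Jensen's and Young's inequalities so that the coefficient on $\mathbf{C}_K^{j\intercal} Q_s \mathbf{C}_K^j$ is exactly $(1{+}\epsilon)$, letting \eqref{eq:TermCostLyap} collapse the $\mathbf{s}_k^*$-dependent terms into a clean $-\mathbf{s}_k^{*\intercal} P \mathbf{s}_k^*$ and leaving only a purely disturbance-driven residual.
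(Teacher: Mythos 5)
Your proposal is correct, and parts (a) and (b) follow the paper's proof exactly: SRF from the feasibility of \eqref{eq:OfflineOpt} via Proposition \ref{Prop:LambdaRefo} and Theorem \ref{Thm:SuffCond}, then induction from $k{=}0$, then Theorem \ref{Th:RF_RCS} with $t_0\ge 0$. For part (c) you use the same Lyapunov candidate (the optimal cost), the same lower bound $\rho_{\min}(Q_x)\norm{x}^2$, an equivalent continuity argument for $\sigma_2$, and the same candidate input $\sum_j \tau_j \hat{\mathbf{u}}_{k+1}^j$; the one place you genuinely diverge is the order of operations in the decrease step. You apply Jensen's inequality to the convex quadratic \emph{first}, pushing the convex combination outside the quadratic form so that the design LMI \eqref{eq:TermCostLyap} can be invoked vertex-by-vertex. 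The paper instead keeps the aggregated matrices $\mathbf{C}_K(\tau)=\sum_j\tau_j\mathbf{C}_K^j$ and $\mathbf{C}_M(\tau)$, splits the cross term with the same $(1{+}\epsilon)$, $(1{+}\tfrac{1}{\epsilon})$ weights, and then must prove separately (the Schur-complement chain \eqref{eq:SchurTerm}) that the vertex LMIs imply $(1{+}\epsilon)\mathbf{C}_K(\tau)\tr Q_s\mathbf{C}_K(\tau)-\mathbf{S}\tr Q_s\mathbf{S}\prec -P$ for every convex combination. Your route makes that entire lemma unnecessary, at the mild cost of a $\max_j$ (or a $\tau$-weighted sum) in the disturbance residual rather than a single $\mathbf{C}_M(\tau)$ term; either way the residual is bounded by $c\norm{w_k}^2$ and yields a valid $\lambda_2$. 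Two cosmetic points: the Lyapunov-to-ISS implication is a direct sufficiency result (the paper cites \cite[Lemma~3.5]{jiang2001input}), not a converse Lyapunov theorem; and your suboptimality bound $V(x_{k+1})\le$ candidate cost tacitly requires the candidate to be feasible at $x_{k+1}$, which is exactly what the SRF argument of part (a) supplies, so you should say so explicitly rather than only gesturing at Theorem \ref{Thm:SuffCond}.
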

\begin{proof}
    In order to prove property \ref{Thm:Props1}, note that the constraints in Proposition \ref{Prop:LambdaRefo} are satisfied because the offline optimization \eqref{eq:OfflineOpt} is feasible. Therefore, the set inclusions  \eqref{eq:SuffCond} are satisfied, and Theorem \ref{Thm:SuffCond} implies that the online MPC optimization \eqref{eq:OnlOptProb_reg} is strongly recursively feasible.  Moreover, because \eqref{eq:OnlOptProb_reg} admits a feasible solution at time $k=0$, it remains feasible for all $k > 0$. 
    
    The property \ref{Thm:Props2} follows from Theorem \ref{Th:RF_RCS}. This is because the offline optimization \eqref{eq:OfflineOpt} explicitly enforces $t_0 \ge 0$, and SRF implies RF. 

    The proof of \ref{Thm:Props3} can be seen as follows. Let the optimal MPC cost function be a candidate ISS Lyapunov function denoted by
    \begin{align}\label{eq:CompactCost}
    \begin{split}
        V(x_k) &= \hat{x}_{N|k}^{*\intercal} Q_N \hat{x}_{N|k}^* + \displaystyle \sum_{i=0}^{N-1} \left( \hat{x}_{i|k}^{*\intercal} Q_x \hat{x}_{i|k}^* +  \hat{u}_{i|k}^{*\intercal} Q_u \hat{u}_{i|k}^* \right) \\
        &= \mathbf{s}_k^{*\intercal} \mathbf{S}\tr  Q_s \mathbf{S} \mathbf{s}_k^*.
    \end{split}
    \end{align}
    
    In the following, it will be shown that $V(x)$ satisfies \eqref{eq:ISSLyap}.  Consider  $\sigma_1(\norm{x}) = \rho_{\min} (Q_x) \norm{x}^2$, which is guaranteed to be a lower bound of $V(x)$. Moreover, $V(x)$ is a continuous and piecewise quadratic function in  $\mathbb{P}_{0,x}$ as shown in \cite{bemporad2002explicit}. Therefore, the existence of $\sigma_2(\cdot)$ satisfying \eqref{eq:ISSLyap1} is a direct consequence of \cite[Lemma~4]{limon2009input}.
    
    In order to show \eqref{eq:ISSLyap2}, let the perturbation $\Delta_k$ be given as \eqref{eq:Delta_k} and consider the candidate input sequences in \eqref{eq:InputParameterization}. The cost function satisfies
    \begin{subequations}\label{eq:Lyapunov}
    \begin{align}
        &V(x_{k+1}) - V(x_k) = \mathbf{s}_{k+1}^{*\intercal} \mathbf{S}\tr  Q_s \mathbf{S} \mathbf{s}_{k+1}^* - \mathbf{s}_k^{*\intercal} \mathbf{S}\tr  Q_s \mathbf{S} \mathbf{s}_k^* \nonumber \\
        &\le \sum_{j=1}^{n_\Delta} \tau_j \bigg(\mathbf{C}_K^j \mathbf{s}_k^* + \mathbf{C}_M^j w_k  \bigg)\tr  Q_s \sum_{l=1}^{n_\Delta} \tau_l \bigg(\mathbf{C}_K^l \mathbf{s}_k^* + \mathbf{C}_M^l w_k  \bigg)\nonumber \\
        & \qquad \qquad - \mathbf{s}_k^{*\intercal} \mathbf{S}\tr  Q_s \mathbf{S} \mathbf{s}_k^* \label{eq:Lyapunov1}\\
        &\le  \left(1+\frac{1}{\epsilon}\right) \left(w_k\tr \mathbf{C}_M(\tau)\tr Q_s \mathbf{C}_M(\tau) w_k \right) + \nonumber\\ 
        & \quad (1+\epsilon) \left(\mathbf{s}_k^{*\intercal} \mathbf{C}_K(\tau)\tr  Q_s \mathbf{C}_K(\tau) \mathbf{s}_k^* \right) - \mathbf{s}_k^{*\intercal} \mathbf{S}\tr  Q_s \mathbf{S} \mathbf{s}_k^* \label{eq:Lyapunov2} \\
        &\le \lambda_2(w_k) + \mathbf{s}_k^{*\intercal} ( (1+\epsilon) \mathbf{C}_K(\tau)\tr  Q_s \mathbf{C}_K(\tau)  - \mathbf{S}\tr  Q_s \mathbf{S} )\mathbf{s}_k^* \label{eq:Lyapunov3}
    \end{align}
    \end{subequations}
     where $\mathbf{C}_K(\tau) := \sum_{j=1}^{n_\Delta} \tau_j\mathbf{C}_K^j $ and $\mathbf{C}_M(\tau) := \sum_{j=1}^{n_\Delta} \tau_j\mathbf{C}_M^j $, and \eqref{eq:Lyapunov2} is obtained by applying Cauchy-Schwarz theorem and Fenchel's inequality \cite{boyd2004convex}, by which, $\norm{x+w} \le (1+\epsilon)\norm{x} + (1+\frac{1}{\epsilon}) \norm{w}$ for any $\epsilon>0$. The function $\lambda_2(w_k)$ in \eqref{eq:Lyapunov3} can chosen as the first term in \eqref{eq:Lyapunov2}. Moreover, using \eqref{eq:TermCostLyap}, it can be seen that 
     \begin{subequations}\label{eq:SchurTerm}
     \begin{align}
         &\mathbf{S}\tr  Q_s \mathbf{S} - P - (1{+}\epsilon)\mathbf{C}_K^{j\intercal}  Q_s \mathbf{C}_K^j  \succ 0, \quad \forall j \in \mathbb{N}_{1}^{n_\Delta} \label{eq:SchurTerm1}\\
         &\implies \begin{bmatrix}
             (1{+}\epsilon)^{-1}Q_s^{-1}  & \mathbf{C}_K^j \\
            \mathbf{C}_K^{j\intercal} & \mathbf{S}\tr  Q_s \mathbf{S} - P
         \end{bmatrix} \succ 0, \: \forall j \in \mathbb{N}_{1}^{n_\Delta}, \label{eq:SchurTerm2}\\
        &\implies  \begin{bmatrix}
            \displaystyle\sum_{j=1}^{n_\Delta} \tau_j (1{+}\epsilon)^{-1} Q_s^{-1}  & \displaystyle\sum_{j=1}^{n_\Delta} \tau_j \mathbf{C}_K^j \\
            \displaystyle\sum_{j=1}^{n_\Delta} \tau_j \mathbf{C}_K^{j\intercal} & \displaystyle\sum_{j=1}^{n_\Delta} \tau_j (\mathbf{S}\tr  Q_s \mathbf{S} - P)
         \end{bmatrix} \succ 0, \label{eq:SchurTerm3}\\
       &\implies  \begin{bmatrix}
             (1{+}\epsilon)^{-1}Q_s^{-1}  & \mathbf{C}_K(\tau) \\
            \mathbf{C}_K(\tau)^\intercal & (\mathbf{S}\tr  Q_s \mathbf{S}- P)
         \end{bmatrix}  \succ 0, \label{eq:SchurTerm4}\\
        &\implies  \mathbf{S}\tr  Q_s \mathbf{S}- P - (1{+}\epsilon)\mathbf{C}_K(\tau)\tr  Q_s \mathbf{C}_K(\tau)   \succ 0, \label{eq:SchurTerm5}
     \end{align}
     \end{subequations}
     where the Schur complement lemma \cite{zhang2006schur} is used to obtain \eqref{eq:SchurTerm2} and \eqref{eq:SchurTerm5}. Using \eqref{eq:Lyapunov3} and \eqref{eq:SchurTerm5},
\begin{align}
\begin{split}
    &V(x_{k+1}) - V(x_k) \le \lambda_2(w_k) - s_k^{*\intercal} P s_k^* \\
    &\le \lambda_2(w_k) - x_k^{\intercal}P_x x_k = \lambda_2(w_k) - \sigma_3(\norm{x_k}). 
\end{split}
\end{align}
Thus, $V(x)$ satisfies \eqref{eq:ISSLyap}, and is an ISS Lyapunov function of the system. Therefore, applying \cite[Lemma~3.5]{jiang2001input}, the closed-loop system is ISS.
\end{proof}
It can be seen that $\epsilon$ is used to define $\lambda_2(\cdot)$, which affects the function $\lambda_1(\cdot)$ in \eqref{eq:ISS} \cite{jiang2001input}. Thus, the choice of $\epsilon$ affects the stability margin of the robust MPC controller, and can be used as a tuning parameter by the designer. 

\subsection{Discussion}
The central idea of this work is that RF is a one-step property, and can be used to guarantee constraint satisfaction in closed loop as shown in Theorem \ref{Th:RF_RCS}. However, barring a few results in early MPC literature such as \cite{chisci2001systems,kerrigan2001Robust}, this idea has not been fully utilized. Hence, some discussion is warranted on interesting research directions.

\subsubsection{Beyond constraint tightening}
Theorem \ref{Th:RF_RCS} does not restrict the robust MPC optimization to be of the structure \eqref{eq:OnlOptProb_reg}, which is a characteristic of constraint tightening techniques. The theorem is also valid for any MPC optimization of the form
\begin{subequations}\label{eq:OnlOptProb_flex}
    \begin{align}
        \min_{\hat{\mathbf{x}}_k,\hat{\mathbf{u}}_k}  \quad  \hat{x}_{N|k}\tr &Q_N \hat{x}_{N|k} + \textstyle\sum_{i=0}^{N-1}  \hat{x}_{i|k}\tr Q_x \hat{x}_{i|k} +  \hat{u}_{i|k}\tr Q_u \hat{u}_{i|k} \label{eq:OnlOptProb_flex1} \\
		\text{s.t.} \quad &\hat{x}_{k,0}=x_k,\\
            A&\hat{x}_{i|k} + B\hat{u}_{i|k} = \hat{x}_{i|k+1}, \:  \forall i\in \mathbb{N}_{0}^{N-1},\label{eq:OnlOptProb_flex2}\\
            &F\hat{x}_{0|k} + G\hat{u}_{0|k}  \le b,   \label{eq:OnlOptProb_flex3} \\
            &\Phi_x \hat{x}_{0|k} + \Phi_u \hat{\mathbf{u}}_k \le \phi, \label{eq:OnlOptProb_flex4}
    \end{align}
\end{subequations}
where $\Phi_x,\Phi_u$ and $\phi$ can be chosen by the designer. In \eqref{eq:OnlOptProb_flex}, the system constraints \eqref{eq:Constraints} are only imposed on the first state and input in the prediction horizon in \eqref{eq:OnlOptProb_flex3}, which can be seen as selecting $t_0 = 0$ in \eqref{eq:OnlOptProb}. Using a similar approach as in Theorem \ref{Thm:SuffCond} and Proposition \ref{Prop:LambdaRefo}, sufficient conditions can also be derived for SRF of \eqref{eq:OnlOptProb_flex}. However, in practice, solving the resulting non-convex optimization problem is harder, due to the higher number of bilinearities. 

\subsubsection{Robust control invariance}
A key concept which is known in control literature is that of robust control invariance (RCI) \cite{blanchini2008set,kerrigan2001robust_PHD}. As opposed to  robust invariance (RI), which considers invariance of autonomous systems subject to uncertainty, RCI considers controlled dynamical systems. For this reason, computation of RCI sets is challenging \cite{rakovic2007optimized,rungger2017}.  

RCI provides an alternative interpretation of Theorem \ref{Thm:NecSuff}. A necessary and sufficient condition for strong recursive feasibility of MPC is that the feasible region is RCI. Because generic RCI sets are difficult to compute, the proposed method uses parameterized feedback laws to formulate the problem as computation of the resulting parameterized RI sets. Note that all RI sets are also RCI sets, and are easier to compute in practice. 

\section{Illustrative example}
In this section, Algorithm \ref{Alg:RMPC} will be used to generate a controller for a mass-spring-damper system, where two masses are connected to each other by a spring and a damper, and the spring constant and damping coefficient are uncertain.

The example has been adapted from \cite{parsi2022scalable}, where similar systems with increasing number of masses are considered, and a tube MPC strategy was used to design the controller. The state of the system is a vector of the  position and velocity of each mass, and the control input is a vector of forces acting on each mass.  The system dynamics can be represented in discrete time by 
\begin{align}\label{eq:MSD_Dyn}
A &= \left[ \arraycolsep=4pt \begin{array}{cccc}
1 & T_s & 0  & 0 \\
{-}\frac{k_{12}T_s}{m_1} & {-}\frac{c_{12}T_s}{m_1}{+}1 & \frac{k_{12}T_s}{m_1} & \frac{c_{12}T_s}{m_1} \\ 
0 & 0 & 1  & T_s\\
\frac{k_{12}T_s}{m_2} & \frac{c_{12}T_s}{m_2} & {-}\frac{k_{12}T_s}{m_2} & -\frac{c_{12}T_s}{m_2} {+} 1 \\
\end{array}\right], \nonumber\\
B_p &{=} \left[ \begin{array}{cccc}
0 & 0   \\
\frac{k_{u}T_s}{m_1} & \frac{c_u T_s}{m_1}  \\
0 & 0 \\
{-}\frac{k_{u}T_s}{m_2} & {-}\frac{c_{u}T_s}{m_2}  \\
\end{array} \right] , 
B {=} 
\begin{bmatrix}
0 & 0  \\
\frac{T_s}{m_1} & 0  \\
0 & 0  \\
0 & \frac{T_s}{m_2} 
\end{bmatrix}, 
B_w {=} w_b B, 
\nonumber \\%
D_x &= 
\begin{bmatrix}
{-}1 & 0 \\
0 & 1  \\
\end{bmatrix}, \quad D_u = 0, \quad D_w = 0,
\end{align}
where $m_1 = m_2 =0.2\mathrm{kg}$ is the mass, $k_{12} = 0.5 \mathrm{Nm^{-1}}$ and $c_{12} = 0.5 \mathrm{Nsm^{-1}}$ are the nominal spring constant and damping coefficient. 
In addition, $T_s = 0.1\mathrm{s}$ represents the sampling time used for discretization, and $w_b=0.2$ represents the bound on a disturbance acting on the velocity states.  The magnitudes of all the states and inputs are bounded by 2. The terms $k_u {=} 0.04 k_{12}$ and $c_u = 0.02 c_{12}$  model the structured uncertainty matrix $B_p$. The matrices $\Delta^j$ defining $\mathcal{P}$ in \eqref{eq:DeltaBound} were chosen as diagonal matrices of size $ 2\times 2$ with $\pm1$ on each diagonal entry, modeling a $\pm4\%$ uncertainty for the spring constant, and $\pm2\%$ uncertainty for the damping coefficient.

Two controllers are designed for the system. The first uses Algorithm \ref{Alg:RMPC}, and is referred to as CLR, for closed-loop robust MPC. The second uses the method proposed in \cite{parsi2022scalable}, where an ellipsoidal tube is constructed at each time step by the controller to bound the states and inputs in the prediction horizon. This method will be referred to as ET, for ellipsoidal tube MPC. Both the controllers use $N=5$ time steps. The terminal constraints of CLR are computed using $K_Y$ chosen as the LQR gain, and $k' = 2$. The constants $\mu$ and $\epsilon$ were chosen as 2 and 0.1 respectively. The terminal cost is computed such that it satisfies \eqref{eq:TermCostLyap} and its deviation from the infinite horizon LQR gain is minimized.

The initial state of the system is chosen to be $x_0 = [1.9,0.5,-1.7,1.7]\tr$. The disturbance affecting the system is the same for both the controllers, and is sampled from a uniform distribution in $\mathcal{W}$. The simulations were performed for 25 realizations of the spring constants, damping coefficients and disturbance sequences chosen using a uniform distribution within the specified bounds. 

It was observed that CLR requires higher offline computation time compared to ET, and a much lower online computation time. This is because the ET algorithm solves semi definite programs offline and online, whereas CLR  solves a non-convex optimization offline and a quadratic program online. All the optimization problems were solved  on a laptop with Intel i7-8550U processor, and formulated using YALMIP \cite{Lofberg2004}. The convex optimization problems were solved using Gurobi \cite{gurobi} and MOSEK \cite{mosek}, and the non-convex optimization using IPOPT \cite{wachter2006implementation}. The offline computation time for the CLR controller is  $ 45.1 \mathrm{s}$, whereas that for ET is $ 0.06 \mathrm{s}$. For solving \eqref{eq:OfflineOpt}, an initial guess for the tightenings was computed as follows. Ignoring the perturbation from $\Delta_k$, the approach from \cite{parsi2022computationally} was used to obtain a tightening. This tightening was then scaled by a factor of 1.7, which was a feasible initial guess for \eqref{eq:OfflineOpt}. The average online computation time for CLR is $ 2.2 \mathrm{ms}$, whereas that for ET is $ 53.9 \mathrm{ms}$. 

The closed-loop trajectories of the system are shown in Figure \ref{fig:traj_msd}, where highlighted regions show the upper and lower bounds of state and input variables at each time step in closed loop over all the simulations. It can be seen that both controllers successfully regulate the state to the origin without constraint violations. However, the ET controller is aggressive at the start of the simulation, because the initial position is too close to the constraint, and the corresponding velocity is positive. Moreover, the ET controller recursively outer-approximates the effect of worst-case perturbations on state trajectories over the prediction horizon, resulting in the aggressive behavior. The average closed-loop cost achieved by CLR is 83.0 and that for ET is 93.7, which is a $11.4\%$ cost reduction.   

\begin{figure}[t]
	\centering
	\includegraphics[scale=1]{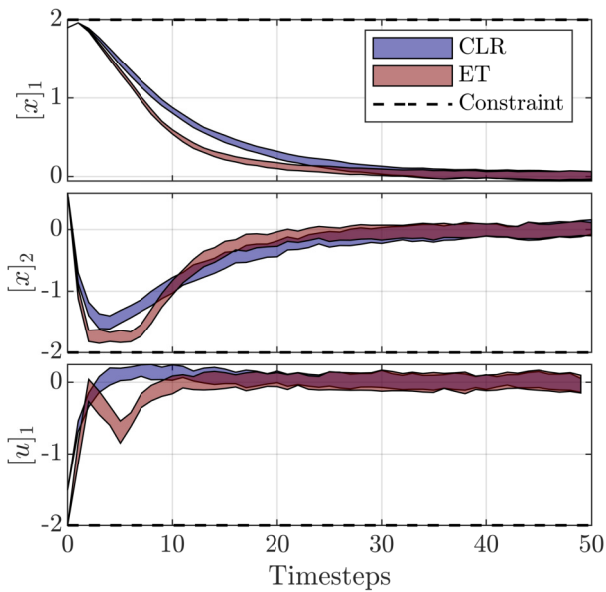}
	\caption{ Upper and lower bounds on the closed-loop trajectories with CLR and ET MPC algorithms over 25 random realizations of disturbances and true system parameters.} 
 \label{fig:traj_msd}
\end{figure}

\section{Conclusions}
A novel design approach was presented for robust MPC, using the property that recursive feasibility can imply constraint satisfaction in closed loop. Recursive feasibility is a one-step property, meaning that it depends on perturbations occurring at \emph{one} time step. Moreover, recursive feasibility only requires the existence of a feasible input sequence at the subsequent time step, as the MPC optimizer can compute the input sequence in closed loop. Exploiting these ideas, it was shown that recursive feasibility can be imposed as a constraint in the robust MPC design. Such a design reduces the conservatism of most existing MPC methods, which ensure constraint satisfaction by computing feasible open-loop trajectories under multiple perturbations and disturbances occurring along the prediction horizon. 

\bibliographystyle{IEEEtran}
\bibliography{biblio_CLR}   

\end{document}